\crefname{hypothesis}{Hypothesis}{Hypotheses}
\newcommand\relatedversion{}
\title{\Large Facility Location and $k$-Median with Fair Outliers\relatedversion}
   \author{Rajni Dabas\thanks{Northwestern University, Evanston IL 60208.}
    \and Samir Khuller\footnotemark[1]
    \and Emilie Rivkin\footnotemark[1]}
\date{}
\newcommand{\Z}{{\mathbb{Z}}}
\newcommand{\cN}{{\mathcal{N}}}
\renewcommand{\phi}{\varphi}
\newcommand{\vare}{\varepsilon}
\newcommand{\lrp}[1]{\left({#1}\right)}
\renewcommand{\O}{\operatorname{OPT}}
\newcommand{\mss}{\operatorname{MSS}}
\newcommand{\kmss}{k\operatorname{MSS}}
\newcommand{\kmssg}{k\operatorname{MSS}_{\geq}}
\begin{document}
\onecolumn
\maketitle

 \begin{abstract}

Classical clustering problems such as \emph{Facility Location} and \emph{$k$-Median} aim to efficiently serve a set of clients from a subset of facilities---minimizing the total cost of facility openings and client assignments in Facility Location, and minimizing assignment (service) cost under a facility count constraint in $k$-Median. These problems are highly sensitive to outliers, and therefore researchers have studied variants that allow excluding a small number of clients as outliers to reduce cost. However, in many real-world settings, clients belong to different demographic or functional groups, and unconstrained outlier removal can disproportionately exclude certain groups, raising fairness concerns.

We study \emph{Facility Location with Fair Outliers}, where each group is allowed a specified number of outliers, and the objective is to minimize total cost while respecting group-wise fairness constraints. We present a bicriteria approximation with a $O(1/\epsilon)$ approximation factor and $(1+ 2\epsilon)$ factor violation in outliers per group. For \emph{$k$-Median with Fair Outliers}, we design a bicriteria approximation with a $4(1+\omega/\epsilon)$ approximation factor and $(\omega + \epsilon)$ violation in outliers per group improving on prior work by avoiding dependence on $k$ in outlier violations.  We also prove that the problems are W[1]-hard parameterized by $\omega$, assuming the Exponential Time Hypothesis.

We complement our algorithmic contributions with a detailed empirical analysis, demonstrating that fairness can be achieved with negligible increase in cost and that the integrality gap of the standard LP is small in practice.

%We also extend our fairness framework to \emph{Set Cover} and \emph{Vertex Cover}, giving an $f$-approximation for Set Cover with Fair Outliers and a 2-approximation for Vertex Cover with Fair Outliers.

\end{abstract}
 
\newpage
\twocolumn

\section{Introduction}

% OPEN WITH CLUSTERING PARAGRAPH AND WEAVE IN FL AND KMEDIAN?

The {\em facility location} problem (FL) is a classical problem in combinatorial optimization in which we are given a collection of potential facility locations $\cal F$, each with an opening cost $f_i$. The goal is to serve a set of clients $C$ using the closest open facility with the objective function of minimizing the sum of the costs of opening facilities and the sum of the distances of all clients to their closest open facility. Once we select the facilities to open, the rest of the cost is completely determined, since each client simply connects to its closest open facility.

This classical problem has been extensively studied, with a variety of heuristics as well as fast approximation algorithms being developed - not just that - it has been the bedrock of showcasing different techniques for the problem, including greedy algorithms, LP-rounding algorithms, randomized rounding, Primal-Dual algorithms, etc \cite{Williamson_shmoys-book, JV_book}. So much so, that many valuable techniques can be traced back to early work on this problem. After many years of research, we finally have very fast and practical algorithms, even with an amazing understanding of their worst-case behavior.

Moreover, this problem is closely related to fundamental clustering problems, such as $k$-median ($k$M), which asks that $k$ cluster centers be chosen to minimize the sum of distances of all points to the closest cluster center. In fact, one of the first approximation algorithms for the $k$-median works by essentially reducing it to a collection of facility location instances with different facility opening costs with each point being viewed as a client. Multiple solutions are then combined by a randomized algorithm to produce a feasible solution \cite{jain_primal-dual_1999}. In most basic clustering problems, the user has to specify $k$, the number of clusters in advance without full knowledge of the data - with facility location by setting different values for facility costs we can obtain a spectrum of solutions that essentially creates clusters (nodes in $C$ that are connected to a common open facility).

%Add more material on why clustering is important
%Should we also give formal definitions of all the problems?

All of these measures are highly sensitive to outliers in the data and can lead to very skewed solutions. This led researchers to consider the problem of clustering with outliers \cite{charikar_algorithms_2001}. Here, we wish to find the best possible clustering for a given  (user-specified) fraction  of the points. Several basic questions were considered - including $k$-centers, $k$-median and Facility Location. Over time, hundreds of papers have appeared that address the issue of outliers in many contexts \cite{charikar_algorithms_2001, jain_greedy_2003, chen_constant_2008, Friggstad_LS_2019, krishnaswamy_constant_2017, Gupta_kmo_2021, chakrabarty_non-uniform_2016,Harris_lottery_model_2019,Bhaskara_kmeans0_2019,Chen_neirips_outlier_detection_18}.

Our work is motivated by Almanza et. al. \cite{almanza_k-clustering_2022} who wrote "Clustering problems and clustering algorithms are often overly sensitive to the presence of outliers: even a handful of points can greatly affect the structure of the optimal solution and its cost. This is why many algorithms for robust clustering problems have been formulated in recent years. These algorithms discard some points as outliers, excluding them from the clustering. However, outlier selection can be unfair: some categories of input points may be disproportionately affected by the outlier removal algorithm."

%Fill in citations to work on K-center/FL/K-median/K-means etc

Returning to the basic facility location problem, we are simply asked to serve a given number of clients with the open facilities (see Figure \ref{fig:fair}) so as to minimize the cost of opening facilities and serving a given number of clients in the cheapest possible way. Thus, we have no way to control which clients become outliers.

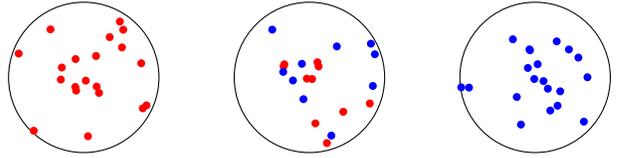
\begin{figure}[h]
\centering
\begin{tikzpicture}
% Circle radius
\def\radius{1}

% Circle positions
\def\xshift{3}

\draw (0,0) circle (\radius);
\foreach \i in {1,...,20} {
\pgfmathsetmacro{\angle}{rand*360}
\pgfmathsetmacro{\r}{rand*\radius}
\fill[red] ({\r*cos(\angle)}, {\r*sin(\angle)}) circle (1.5pt);
}

\draw (\xshift,0) circle (\radius);
\foreach \i in {1,...,10} {
\pgfmathsetmacro{\angle}{rand*360}
\pgfmathsetmacro{\r}{rand*\radius}
\fill[red] ({\xshift + \r*cos(\angle)}, {\r*sin(\angle)}) circle (1.5pt);
}
\foreach \i in {1,...,10} {
\pgfmathsetmacro{\angle}{rand*360}
\pgfmathsetmacro{\r}{rand*\radius}
\fill[blue] ({\xshift + \r*cos(\angle)}, {\r*sin(\angle)}) circle (1.5pt);
}

\draw (2*\xshift,0) circle (\radius);
\foreach \i in {1,...,20} {
\pgfmathsetmacro{\angle}{rand*360}
\pgfmathsetmacro{\r}{rand*\radius}
\fill[blue] ({2*\xshift + \r*cos(\angle)}, {\r*sin(\angle)}) circle (1.5pt);
}

\end{tikzpicture}
\caption{Illustration showing traditional clustering with outliers algorithms can disproportionately affect one group. We have two types of clients in three clusters of roughly the same size (20 clients in each cluster, 60 in all, 30 of each type). If we wish to provide service to a total of 20 clients at low cost, then we might pick the left cluster or the right cluster, but the center cluster is more "fair" in the sense that at least 10 clients from each community are served. With either the left cluster or the right cluster only members of one group are served. On the other hand if we wish to provide service to 40 clients, then the left and right clusters would be the right choice, but picking the middle and left cluster would serve one population very well, but the other very poorly.}
\label{fig:fair}
\end{figure}

In many applications, we might be dealing with clients with different attributes associated with them. For example, services in a city might need to be planned to serve different communities. If we utilize any of the algorithms with outliers, we have no control over which communities are unfairly dropped from consideration. This means that we need to provide better control over the outliers of different groups.

%Formally, in our model we are given a collection of potential facilities $\cal F$ and a set of clients $C$ partitioned into sets $C_1, C_2, \dots, C_{\omega}$. We have parameters $n'_1, n'_2, \ldots, n'_\omega$. Our goal is to minimize the cost of opening facilities and ensure that we serve at least $n'_i$ clients from $C_i$ by open facilities (call this subset of clients $C'_i$).

%Formally, we wish to select $F \subseteq \cal F$, to minimize
%\[ \sum_{i \in F} f_i + \min_{C'_i \subseteq C_i}\sum_{j \in C'_i} d(j,F)\]

In particular, our model incorporates fairness across groups by explicitly controlling the number of outliers allowed in each community. Let $(\mathcal{M},d)$ denotes a metric space where $\mathcal{M}$ is a finite set of points and $d:\mathcal{M} \times \mathcal{M} \rightarrow \mathbb{R}^+$ is a distance function satisfying triangle inequality and symmetry. The model can now be formally defined as follows:

\begin{definition}[Facility Location with Fair Outliers] We are given a set $\mathcal{F} \subseteq \mathcal{M}$ of potential facility locations, and a set of clients $C \subseteq \mathcal{M}$ partitioned into $\omega$ disjoint groups: $C_1, C_2, \dots, C_{\omega}$. Each facility $i \in \mathcal{F}$ has an associated opening cost $f_i$. Additionally, for each group $C_g$, we are given an upper bound $\ell_g$ on the number of clients that may be designated as outliers. The goal is to select a subset of facilities $F \subseteq \mathcal{F}$ to open, and for each group $C_g$, a subset of outliers $C'_g \subseteq C_g$ with $|C'_g| \leq \ell_g$ to minimize the total cost:
\[
\sum_{i \in F} f_i + \sum_{g=1}^{\omega} \sum_{j \in C_g \setminus C'_g} d_{j,F}
\]
where $d_{j,F}$ denotes the cost of assigning client $j$ to its nearest open facility in $F$.
\end{definition}

The $k$-Median with Fair Outliers problem is same as Facility Location with Fair Outliers except that, instead of facility opening costs, the $k$-Median variant imposes a hard budget $k$ on the number of facilities that may be opened. Formally,

\begin{definition}[$k$-Median with Fair Outliers] We are given a set $\mathcal{F} \subseteq \mathcal{M}$ of potential facility locations, and a set of clients $C \subseteq \mathcal{M}$ partitioned into $\omega$ disjoint groups: $C_1, C_2, \dots, C_{\omega}$. For each group $C_g$, we are given an upper bound $\ell_g$ on the number of clients that may be designated as outliers. Additionally, we are given an integer $k$ bounding the number of facilities that may be opened. The goal is to select a subset of facilities $F \subseteq \mathcal{F}$ to open with $|F| \leq k$, and for each group $C_g$, a subset of outliers $C'_g \subseteq C_g$ with $|C'_g| \leq \ell_g$ to minimize :
\[
\sum_{g=1}^{\omega} \sum_{j \in C_g \setminus C'_g} d_{j,F}
\]
\end{definition}

These problems are $NP$-hard and so researchers have resorted to the design of heuristics and approximation algorithms.

\subsection{Our Contributions}

We first study facility location with fair outliers problem for arbitrary number of groups in Section \ref{sec:flfo} to give a $O(1/\epsilon)$ factor approximation for the problem with $(1+2\epsilon)$ factor violation in the outliers for any group. In particular, we present Theorem \ref{thm:flfo}.

Inamdar and Varadarajan~\cite{Inamdar_LP_SC_2018} gave an \(O(\log \omega)\)-approximation for the facility location problem with fair outliers for arbitrary \(\omega\). More recently, Bajpai et al.~\cite{bajpai2025flfo} studied the case of a constant number of groups and obtained a 4-approximation. However, both these results rely on solving an exponentially large linear program using the ellipsoid method, and are therefore impractical to implement even for small values of \(\omega >1\).

\begin{restatable}{theorem}{flfo}
\label{thm:flfo}
   There exists a polynomial time $O(1/\epsilon)$-factor approximation algorithm with $(1+2\epsilon)$ factor violation in outliers for each group for facility location with fair outliers problem where number of groups is arbitrary and $\epsilon>0$ is a fixed parameter.
\end{restatable}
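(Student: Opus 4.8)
The plan is to exploit the fact that, unlike the exponentially large LPs used in the prior work of Inamdar--Varadarajan and Bajpai et al., the \emph{natural} relaxation of this problem has polynomial size and can be solved directly. Introduce variables $y_i \in [0,1]$ for opening facility $i$ and $x_{ij} \in [0,1]$ for assigning client $j$ to facility $i$, with the constraints $x_{ij} \le y_i$, $\sum_i x_{ij} \le 1$ for every client, and one coverage constraint per group, $\sum_{j \in C_g}\sum_i x_{ij} \ge |C_g| - \ell_g$, minimizing $\sum_i f_i y_i + \sum_{i,j} d_{ij} x_{ij}$. Since this is a relaxation of the integer program, its optimum $\mathrm{LP}^\star$ satisfies $\mathrm{LP}^\star \le \O$, and it has only $O(|\cF|\,|C|)$ variables and constraints, so it is solvable in polynomial time.

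First I would solve this LP and, writing $z_j = \sum_i x_{ij}$ for the fractional amount by which client $j$ is served, fix a threshold $t = \tfrac{2\epsilon}{1+2\epsilon}$ and let $S = \{\, j : z_j \ge t \,\}$. The clients outside $S$ will be declared outliers. The crucial point --- handled \emph{independently in each group} --- is a Markov-type counting bound: since each dropped group-$g$ client contributes strictly less than $t$ to the coverage sum while each served client contributes at most $1$, the group-$g$ constraint forces the number $n_g$ of dropped group-$g$ clients to satisfy $(1-t)\,n_g < \ell_g$, i.e.\ $n_g < \ell_g/(1-t) = (1+2\epsilon)\,\ell_g$. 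Thus every group's outlier budget is respected up to the promised $(1+2\epsilon)$ factor, with \emph{no} dependence on $\omega$, precisely because the groups decouple completely once $S$ has been fixed.

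It then remains to serve all of $S$ cheaply, which is exactly an ordinary outlier-free facility-location instance on the client set $S$. For this I would rescale the fractional solution: set $x'_{ij} = x_{ij}/z_j$ for $j \in S$ and $y'_i = \min\{1, y_i/t\}$. A short check shows $(x',y')$ is feasible for standard facility location on $S$ --- each $j \in S$ is now fully covered since $\sum_i x'_{ij} = 1$, and $x'_{ij} \le y'_i$ because $z_j \ge t$ --- and since $z_j \ge t$ its total cost is at most $(1/t)\,\mathrm{LP}^\star = O(1/\epsilon)\cdot \O$. Feeding $(x',y')$ into any black-box constant-factor LP-rounding algorithm for facility location (filtering and clustering, say) yields an integral choice of open facilities serving all of $S$ at cost $O(1/\epsilon)\cdot \O$. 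Declaring $C \setminus S$ as outliers then assembles the claimed bicriteria guarantee.

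The single point that requires care --- and the one that pins down the constants --- is the choice of the threshold $t$, which must be balanced against two opposing demands: a larger $t$ shrinks the number of dropped clients per group (driving the outlier violation toward $1$) but inflates the rescaling factor $1/t$ (worsening the approximation), whereas a smaller $t$ does the reverse. Setting $t = \tfrac{2\epsilon}{1+2\epsilon}$ makes $\ell_g/(1-t)$ equal to exactly $(1+2\epsilon)\,\ell_g$ while keeping $1/t = \tfrac12 + \tfrac{1}{2\epsilon} = O(1/\epsilon)$, which is precisely the stated trade-off; verifying feasibility of the rescaled solution and that the coverage constraints may be safely discarded once $S$ is chosen are then routine.
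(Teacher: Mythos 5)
Your proposal is correct and follows essentially the same route as the paper: solve the natural polynomial-size LP, threshold the fractional coverage to select outliers (with the per-group Markov-type count giving the $(1+2\epsilon)$ violation), rescale the surviving clients' assignment to a feasible facility-location LP solution at an $O(1/\epsilon)$ cost blow-up, and round with a black-box algorithm. The only differences are cosmetic --- the paper thresholds on the outlier variable $z_j^* \ge 1-\epsilon$ rather than on coverage at $t=\tfrac{2\epsilon}{1+2\epsilon}$, and your $1/t$ equals $1+\tfrac{1}{2\epsilon}$ rather than $\tfrac12+\tfrac{1}{2\epsilon}$, which does not affect the $O(1/\epsilon)$ bound.
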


We will show that the standard LP formulation for the problem has unbounded integrality gap even when $\omega=1$ and hence we present a  bi-criteria algorithm using LP rounding technique. Moreover, for arbitrary $\omega$, there is a known hardness of $\Omega(\log \omega)$ due to \cite{Inamdar_LP_SC_2018} unless $P = NP$ ruling out any proper (as opposed to bi-criteria)
polynomial-time approximation algorithm for an arbitrary number
of groups.

In this paper, we also give a stronger hardness result. In particular, we show via a series of reductions that the problem Facility Location with Outliers is {\em W[1]-hard} parameterized by \( \omega \), assuming the Exponential Time Hypothesis (ETH). A problem is W[1]-hard with respect to a parameter \( k \) if no algorithm can solve it in time \( f(k) \cdot \text{poly}(n) \), where \( f(k) \) is an arbitrary function of \( k \), and \( n \) is the size of the input, ETH is false. The ETH states that there is no algorithm that can solve 3-SAT (or any NP-complete problem) in sub-exponential time, i.e., in time \( 2^{o(n)} \) for \( n \) variables. In the context of our result, W[1]-hardness with respect to \( \omega \) implies that, no algorithm can solve the problem in time \( f(\omega) \cdot \text{poly}(n) \) for any function \( f(\omega) \), unless ETH fails.

\begin{theorem}
    Assuming ETH, facility Location with fair outliers problem is W[1]-hard parametrized by $\omega$.
\end{theorem}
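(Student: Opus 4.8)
The plan is to establish W[1]-hardness by a parameterized reduction from a problem already known to be W[1]-hard under ETH, with the group count $\omega$ serving as the controlling parameter. The natural source problem is Multicolored Clique (equivalently, Multicolored Independent Set): given a graph whose vertex set is partitioned into $k$ color classes, decide whether there is a clique containing exactly one vertex from each class. This problem is W[1]-hard parameterized by $k$, and under ETH it has no $f(k)\cdot\mathrm{poly}(n)$ algorithm. The key design decision is to make the number of color classes $k$ correspond (up to a function) to the number of outlier groups $\omega$, so that an $f(\omega)\cdot\mathrm{poly}(n)$ algorithm for Facility Location with Fair Outliers would yield the forbidden algorithm for Multicolored Clique.

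First I would set up the gadget so that facilities encode the choice of one vertex per color class. For each color class I would create a cluster of clients together with candidate facilities representing the vertices of that class, and I would use the per-group outlier budgets $\ell_g$ to force exactly one ``selection'' per class: by tuning each $\ell_g$ so that all-but-one of the clients in a class must be served cheaply and the remainder can be dropped as outliers, the fair-outlier constraints act as the mechanism that picks a single vertex from each color. The crucial point is that the \emph{group structure} of the outliers is exactly what lets us enforce the colorful/one-per-class condition, which is precisely why the hardness is tied to the parameter $\omega$ rather than to $k$ or $n$. I would then encode the \emph{edge} relation into the distance function: pairs of selected vertices that are adjacent in the source graph should be servable at low cost, while non-adjacent (forbidden) pairs should incur a large service cost or be impossible to serve within the budget, so that a low-cost fair-outlier solution exists if and only if the selected vertices form a clique.

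The verification then splits into the two standard directions. In the completeness direction I would show that a multicolored clique yields a fair-outlier solution of cost at most some target $B$: open the facilities corresponding to the clique vertices, designate the intended per-class clients as outliers within their budgets $\ell_g$, and bound the service cost using adjacency. In the soundness direction I would argue the contrapositive: any fair-outlier solution of cost at most $B$ must, by the budget and distance tuning, select exactly one vertex per color and serve all cross-class connection clients cheaply, which forces every selected pair to be adjacent, hence a clique. Throughout I must ensure the reduction is computable in time polynomial in $n$ with $\omega = g(k)$ for some fixed function $g$ (ideally $\omega = O(k)$ or $\omega = O(k^2)$), so that the parameter blowup is bounded by a function of $k$ alone; this is what transfers both the W[1]-hardness and, via ETH, the lower bound ruling out $f(\omega)\cdot\mathrm{poly}(n)$ time.

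The main obstacle will be the edge-encoding step: the metric distance function must satisfy the triangle inequality while simultaneously making adjacent pairs cheap and non-adjacent pairs prohibitively expensive relative to the budget $B$. Naively assigning large distances to forbidden pairs can violate the triangle inequality through intermediate points, so I expect to need auxiliary \emph{connection clients} (one per potential edge) placed so that each can be served cheaply only when both of its endpoints are selected, together with a careful choice of a two-scale distance metric (a small scale for legal assignments and a large scale otherwise). Making these connection clients interact correctly with the outlier budgets---so that they do not absorb budget meant for the selection gadget and cannot be silently dropped as outliers to fake a clique---is the delicate part, and it is where most of the technical bookkeeping (the precise values of $\ell_g$, $f_i$, and $B$) will concentrate.
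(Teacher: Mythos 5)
Your proposal takes a genuinely different route from the paper, and it contains a gap at exactly the point you flag as ``the delicate part'': the edge-verification gadget. The mechanism you propose---a connection client per potential edge that ``can be served cheaply only when both of its endpoints are selected''---cannot be realized in facility location semantics, because a client's contribution to the objective is determined solely by its distance to its \emph{single nearest} open facility. Any client placed close to both $F_u$ and $F_v$ is served cheaply as soon as \emph{either} one opens, so it tests a disjunction, not the conjunction needed to certify adjacency. This is not a bookkeeping issue to be fixed by tuning $\ell_g$, $f_i$, and $B$; it is a structural mismatch between Multicolored Clique (whose hardness lives in pairwise constraints across color classes) and a covering-type objective. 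Standard workarounds (edge-selection facilities plus consistency gadgets) reintroduce the same conjunction problem one level up, and your sketch does not address this. Without a working soundness direction---ruling out solutions that open one facility per class but whose selected vertices are non-adjacent---the reduction does not go through as described. A secondary, fixable issue: per-group outlier budgets only force \emph{at least} one facility to serve each class; ``exactly one per class'' must come from the cost budget $B$ limiting the total number of openings, which you mention only implicitly.

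The paper sidesteps all of this by choosing a source problem whose constraints are coordinate-wise counts rather than pairwise relations: it reduces (via two intermediate steps, $k\mathrm{MSS}$ and $k\mathrm{MSS}_{\geq}$) from Multidimensional Subset Sum, which is W[1]-hard parameterized by the dimension $d$. Each vector becomes a facility of cost $1$ with $v_g$ co-located clients of group $g$, distinct facilities are placed far apart, and the per-group outlier budgets encode the covering targets; a solution of cost at most $k$ then corresponds exactly to choosing at most $k$ vectors meeting all $d$ coordinate thresholds, with $\omega = d$. No metric subtleties and no conjunction gadgets are needed. Your instinct that the group structure of the outliers is what carries the hardness is correct, but to make it work you need a source problem whose hard constraints are already ``select few sets so that per-color counts meet thresholds,'' not adjacency.
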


We next study the $k$-median with fair outliers problem in Section~\ref{sec:kmfo}, where we present a bi-criteria approximation algorithm. To the best of our knowledge, the only prior result for this problem is the bi-criteria approximation by \cite{almanza_k-clustering_2022}. They gave a constant-factor approximation, however, their algorithm allows a violation of factor in $3k + 2$ outliers per group, which can be prohibitive in practice, especially when $k$ is large. In contrast, our algorithm eliminates this dependence on $k$ in the violation bounds, making it more practical and scalable. The result is formally stated in Theorem~\ref{thm:kmfo}.

\begin{theorem}
\label{thm:kmfo}
   There exists a polynomial time \(4\lrp{1+\omega/\epsilon}\) factor approximation algorithm with a \(\lrp{\omega+\epsilon}\) factor violation in outliers for each group for $k$-median with fair outliers problem where number of groups is arbitrary and $\epsilon>0$ is a fixed parameter. 
\end{theorem}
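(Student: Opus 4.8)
The plan is to round the natural LP relaxation. Introduce a variable $y_i\in[0,1]$ for opening facility $i$, an assignment variable $x_{ij}\in[0,1]$, and let $z_j = 1-\sum_{i\in\mathcal{F}}x_{ij}$ be the fractional ``outlier mass'' of client $j$. The relaxation minimizes $\sum_{i,j} d_{ij}x_{ij}$ subject to $x_{ij}\le y_i$, $\sum_{i\in\mathcal{F}}y_i\le k$, and the $\omega$ fairness constraints $\sum_{j\in C_g} z_j\le \ell_g$. After solving it, I would process clients in two stages: first decide, group by group, which clients to declare outliers; and then, on the surviving clients, round the facility variables to an integral solution opening at most $k$ facilities via a standard constant-factor $k$-median rounding. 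The factor $4$ in the statement is exactly the guarantee of the classical filtering/consolidation rounding for $k$-median applied to the residual (full-assignment) instance; the remaining analysis must account for the $1+\omega/\epsilon$ blow-up and the $\omega+\epsilon$ outlier violation.

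The cost blow-up comes from rescaling partially served clients. A client $j$ is served only to extent $1-z_j$ in the LP, so to treat it as a fully assigned $k$-median client I would rescale its assignment to $\bar x_{ij}=x_{ij}/(1-z_j)$, inflating its fractional connection cost by $1/(1-z_j)$. Declaring as outliers precisely the clients whose served mass falls below a threshold keeps this inflation bounded: if every surviving client has $1-z_j\ge \epsilon/\omega$, the total rescaled connection cost is at most $(\omega/\epsilon)$ times the LP value, and feeding this into the $4$-approximate $k$-median rounding yields connection cost at most $4(1+\omega/\epsilon)\cdot\O$ once the nearly-fully-served clients are added back. Since the LP value lower-bounds $\O$, this gives the claimed cost guarantee.

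The hardest part is controlling the per-group outlier count under the hard budget of $k$ facilities, and this is where the $\omega$ in the violation enters. Thresholding the LP outlier mass alone only yields a violation close to $1$, so the $\omega$ factor must be paid when the $\omega$ coupled fairness constraints interact with the single budget $\sum_i y_i\le k$ during rounding. I would handle this by pooling: the fair optimum leaves at most $\sum_{g}\ell_g$ clients unserved in total, so it is feasible for the $k$-median instance with the single aggregate outlier budget $L=\sum_g\ell_g$, giving $\O_{\mathrm{total}}\le \O$. Rounding with this aggregate budget (and an $\epsilon$ slack) produces outliers that need not be fairly split, so in the worst case one group may absorb up to $L+\epsilon\approx \omega\,\ell_g$ of them; bounding this against each group's own budget $\ell_g$ is what produces the $(\omega+\epsilon)$ violation and is the step requiring the most care, in particular showing that the aggregate-budget rounding can be carried out while simultaneously opening at most $k$ facilities and keeping connection cost within the bound above. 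An alternative route I would keep in reserve is to relax the facility budget with a Lagrange multiplier $\lambda$, turning the problem into facility location with fair outliers and uniform opening cost $\lambda$; one could then invoke Theorem~\ref{thm:flfo} as a black box and binary search on $\lambda$ until roughly $k$ facilities open, combining the two boundary solutions in the Jain--Vazirani manner. The obstacle there is identical in spirit: the combination step must respect all $\omega$ per-group budgets at once while forcing the facility count down to $k$.
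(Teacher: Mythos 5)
Your main route has a genuine gap: the natural LP relaxation of $k$-median with outliers has an \emph{unbounded integrality gap} even for $\omega=1$ (as Charikar et al.\ already observed, and as the Related Work section of this paper stresses), and the threshold-and-rescale fix that works for facility location does not repair it here. In the bad instances for $k$-median, the LP spreads $k$ units of opening over $k+1$ far-apart client clusters so that \emph{every} client has small outlier mass $z_j$; your threshold then declares no one an outlier, the rescaling factor $1/(1-z_j)$ is harmless, and you are left solving plain $k$-median on all clients against an LP value of $0$. So the inequality ``rescaled connection cost $\le(\omega/\epsilon)\cdot\mathrm{LP}\le(\omega/\epsilon)\cdot\O$'' does not translate into any bound against the true optimum, and no choice of threshold recovers it. Your ``pooling'' discussion for the $(\omega+\epsilon)$ violation is an intuition for where an $\omega$ might appear, but it is not an argument: rounding against the aggregate budget $L=\sum_g\ell_g$ still faces the same integrality gap, and you give no mechanism that simultaneously enforces the budget $k$, the aggregate outlier count, and the cost bound.

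The paper avoids the LP entirely. It guesses $\O_C$ within $(1+\varepsilon)$, reduces to \emph{$k$-median with penalties} by setting $p_j=\O_C/(\gamma\ell_g)$ for every client of group $g$, and invokes the Charikar et al.\ primal--dual algorithm whose guarantee is the Lagrangian-type inequality $C+4P\le 4\O_P$. Feasibility of the fair optimum for the penalty instance gives $\O_P\le(1+\omega/\gamma)\O_C$; then $C\le 4(1+\omega/\gamma)\O_C$ and, reading off the penalty term group by group, $\sum_g \ell'_g/(\gamma\ell_g)\le 1+\omega/\gamma$, hence $\ell'_g\le(\omega+\gamma)\ell_g$ for each $g$. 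This is the step your proposal is missing: the per-group violation comes for free from the \emph{structure of the penalty guarantee} (each group's excess outliers are charged against the same global inequality), not from any rounding of fairness constraints. Your reserve plan (Lagrangify the budget $k$ and call Theorem~\ref{thm:flfo}) is closer in spirit but, as you note yourself, leaves open the bipoint-combination step under $\omega$ coupled budgets, which is precisely the difficulty the penalty reduction is designed to bypass.
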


We also show that the hardness result for facility location extends to the $k$-median with fair outliers problem with slight modification in the reduction. In particular, we prove the following.

\begin{theorem}
Assuming ETH, $k$-median with fair outliers is W[1]-hard parameterized by $\omega$.
\end{theorem}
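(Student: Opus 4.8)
The plan is to reuse the same base W[1]-hard instance that underlies our hardness proof for facility location with fair outliers, and to re-engineer only the part of the construction that controls how many and which facilities are opened. Recall that in the facility-location reduction the parameter $\omega$---the number of client groups---is a function of the parameter of the source problem alone, and that opening costs are the mechanism used to make a designated family of facilities cheap while discouraging the opening of spurious ones. For $k$-median there are no opening costs, so the key modification is to replace this cost mechanism by a hard cardinality budget: I would set every $f_i$ to zero (or to a common value, which is then immaterial) and choose the budget $k$ to equal the number of facilities that an honest solution to the source instance is intended to open.

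With this change the remainder of the gadget---the client set, its partition into the $\omega$ groups $C_1,\dots,C_\omega$, and the per-group outlier caps $\ell_g$---is carried over verbatim, so the parameter $\omega$ is unaffected and the reduction stays polynomial in the input size. The correctness argument then splits in the usual two directions. For a \textsc{yes} instance of the source problem I would exhibit the intended set of at most $k$ facilities together with the corresponding per-group outlier choice, check that each $|C'_g|\le\ell_g$, and bound the resulting assignment cost below the target threshold. For a \textsc{no} instance I would argue the contrapositive: any placement of at most $k$ facilities that respects all the caps $\ell_g$ must encode a (partial) solution of the source instance and therefore cannot drive the total assignment cost below the threshold. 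Since the new parameter $\omega$ depends only on the source parameter and the construction runs in polynomial time, this is an FPT reduction, so W[1]-hardness transfers and, under ETH, rules out any algorithm running in time $f(\omega)\cdot\mathrm{poly}(n)$, exactly as claimed.

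The main obstacle is that the hard budget $k$ is a blunter instrument than opening costs. In the facility-location version, opening a spurious facility is penalized continuously, so the optimizer is steered toward the intended configuration; with only a cardinality cap a solution is free to spend its $k$ openings on any facilities it likes, with no additional penalty. The reduction must therefore be rigid enough that no alternative choice of $k$ facilities can simultaneously meet every per-group outlier bound and achieve low assignment cost---in other words, the caps $\ell_g$ together with the inter-point distances must themselves encode the combinatorial constraints of the source problem tightly enough to forbid cheating. Getting these caps and distances to interlock so that the budget alone enforces the same structure previously enforced by costs is the delicate step; once that rigidity is established, the remaining cost bookkeeping in both directions is routine.
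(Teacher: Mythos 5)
Your proposal matches the paper's approach: the paper obtains this result as a corollary of its facility-location reduction from $\kmssg$, with exactly the modification you describe---drop the unit opening costs and impose the cardinality budget $k$ directly, carrying over the client groups and per-group outlier caps unchanged so that $\omega$ still equals the dimension $d$ of the source instance. The ``delicate step'' you flag (that a hard budget is blunter than opening costs) is already resolved by the construction itself: clients are colocated with facilities that are pairwise far apart, so any solution with assignment cost below the threshold must serve every non-outlier client at its own facility, and hence choosing $k$ facilities is literally choosing $k$ vectors, with the per-group caps encoding the coverage targets $t_j$.
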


Finally, we implement our facility location and \(k\)-median algorithms and evaluate their performance on a real-world and synthetic datasets. For the facility location problem with fairness constraints, even for small values of \(\omega\) (e.g., 2 groups), existing approximation algorithms~\cite{Inamdar_LP_SC_2018, bajpai2025flfo}---though polynomial-time---are impractical due to their reliance on solving exponentially large linear programs via the ellipsoid method. In contrast, our bi-criteria algorithm is significantly more efficient in practice. Empirically, it consistently outperforms its worst-case guarantees in both cost and fairness. In fact, the cost is very close to the LP optimal value, and exhibits only negligible violations of the fairness constraints.

We also implement and evaluate a fast combinatorial greedy dual-fitting heuristic that entirely bypasses solving a linear program, which is the primary bottleneck in terms of running time for bi-criteria algorithm. This heuristic offers a significantly more scalable alternative, especially on large datasets, while still maintaining a reasonable performance guarantee. Empirically, it achieves solution costs within a factor of approximately 1.4 of the LP optimum and attains fairness comparable to the bi-criteria algorithm.

In summary, we observe that fairness our algorithms can ensure fairness with minimal increase in the cost of the solution with respect to non-fair algorithms (facility location with outliers). 

For the \(k\)-median problem, we implement the bi-criteria algorithm and compare its performance---in both cost and fairness---against the non-fair variant, namely \(k\)-median with outliers~\cite{charikar_algorithms_2001}. Similar to the facility location case, we find that (i) our algorithm consistently outperforms its worst-case guarantees in practice, and (ii) it achieves significantly better fairness, while incurring virtually no increase in cost compared to the unfair baseline. We do not directly compare with Almanza et al.~\cite{almanza_k-clustering_2022}, as their experiments focus on the \(k\)-means objective, whereas our focus is on \(k\)-median. Nonetheless, our empirical findings support their conclusions in the context of the \(k\)-median objective as well.

\subsection{Organization of the paper}
The rest of the paper is organized as follows. In Section \ref{sec:RW}, we review the related work. Sections \ref{sec:flfo} and \ref{sec:kmfo} give details of approximation algorithms for facility location with fair outliers and $k$-median with fair outliers respectively. In Section \ref{sec:exp} we layout our experimental results. The proof of W[1] hardness is discussed in Appendix \ref{sec:lb}. Our source code can be accessed on \textit{github}\footnote{anonymous.4open.science/r/FLO-final-8A20/}.

\section{Related Work}
\label{sec:RW}
Our work lies at the intersection of two important areas: the well-studied domain of clustering with outliers and fairness in clustering. We review relevant literature from both individual domains, as well as prior work at their intersection.

{\em \textbf{Clustering with Outliers:}} The concept of outliers in facility location problem was first introduced by Charikar et al. \cite{charikar_algorithms_2001}. They observed that the standard linear programming (LP) relaxation for the Facility Location with Outliers (FLO) problem has an unbounded integrality gap. To address this issue, Charikar et al. \cite{charikar_algorithms_2001} proposed a technique that involves guessing the cost of the most expensive facility in the optimal solution. After guessing the most expensive facility, they leverage the primal-dual framework of Jain and Vazirani \cite{jain_primal-dual_1999} to develop a 3-factor approximation algorithm. The approximation ratio was later improved to 2 by Jain et al. \cite{jain_greedy_2003} through a simple greedy algorithm, analysed using the dual-fitting technique. Charikar et al. \cite{charikar_algorithms_2001} also studied the outlier variant of $k$-Center and $k$-Median Problem. For the $k$-Center\footnote{$k$-Center is same as $k$-Median except in $k$-Center the goal is to minimize the maximum distance instead of total distance.} with Outliers problem, they presented a 3-factor approximation algorithm based on a greedy strategy, which was subsequently improved to a 2-factor approximation via LP rounding \cite{chakrabarty_non-uniform_2016,Harris_lottery_model_2019}. In the case of $k$-Median with Outliers ($k$MO), they proposed a bi-criteria approximation algorithm that achieves a cost approximation factor of $4(1+\frac{1}{\epsilon})$, while allowing a $(1+\epsilon)$ violation in the number of outliers. As with FLO, the standard LP relaxation for $k$MO also exhibits an unbounded integrality gap.  However, unlike FLO, it is not straightforward to overcome this challenge, making the $k$MO problem more challenging and less well-understood. The first constant-factor approximation for $k$MO was obtained by Chen \cite{chen_constant_2008} using a Lagrangian relaxation approach, inspired by the framework of Jain and Vazirani \cite{jain_primal-dual_1999}, combined with iterative local search. However, the approximation factor in Chen’s result, while constant, is relatively large and unspecified. Krishnaswamy et al. \cite{krishnaswamy_constant_2017} significantly improved upon this by applying iterative rounding on a strengthened LP formulation, yielding a $ 7.081$ approximation. This result was further refined by Gupta et al. \cite{Gupta_kmo_2021}, who improved the approximation factor to $(6.994 + \epsilon)$ through enhancements in the iterative rounding technique. Friggstad et al. \cite{Friggstad_LS_2019} employed natural multiswap local search heuristics to address outliers in the $k$-Median problem. Their approach provides a $(3+\epsilon)$-factor approximation with a $(1+\epsilon)$-factor violation in the cardinality. They also show that any constant size multiswap local search algorithm has unbounded locality gap for the problem, therefore, the violation in $k$ or number of outliers is inevitable in their algorithm.

Incorporating outliers into clustering problems such as FL and $k$M significantly increases the complexity of these problems. Techniques that perform well in the standard (non-outlier) setting — such as LP rounding, primal-dual methods, and local search — do not extend straightforwardly to their outlier variants, partially because;

\begin{enumerate} \item the standard linear programming (LP) relaxations for the outlier variants exhibit unbounded integrality gaps as shown in \Cref{sec:flfo} for facility location and by Charikar et. al.
\cite{charikar_algorithms_2001} for $k$-Median and, \item natural multiswap local search algorithms of constant size have unbounded locality gaps \cite{Friggstad_LS_2019}. \end{enumerate}

Consequently, the approximation guarantees in the presence of outliers are substantially worse and require more sophisticated algorithmic techniques. Currently, the best-known approximation ratios for FLO and $k$MO — are $2$ \cite{jain_greedy_2003} and $6.994 + \varepsilon$ \cite{Gupta_kmo_2021}, respectively. In comparison, their non-outlier counterparts — FL and $k$M — admit significantly better approximations of $1.488$ \cite{Li_FL_2011} and $2.613$ \cite{Gowda_km_2023}, respectively. Notably, the outlier variants do not currently have stronger lower bounds also; the best-known bounds remain at $1.463$ \cite{Guha_greedy_fl_1998} for FL and $1.763$ \cite{jain_greedy_lb_2002} for $k$M.

{\em \textbf{Fairness in Clustering:}} Over the past few years, there has been a surge of interest in incorporating fairness into clustering algorithms, largely driven by growing concerns around bias and equitable treatment in machine learning applications. This has led to the development of new algorithmic frameworks that enforce fairness constraints, such as demographic parity and group-level representation, ensuring that the clustering outcomes do not disproportionately disadvantage any particular group. A variety of fairness notions have been studied in this context, including balance constraints, proportional representation, and individual fairness. See survey by Chhabra et al. \cite{Chhabra-fairnessSruvey-2021} and references within.

{\em \textbf{Fairness for Outliers:}} In a parallel line of research, fairness has also been explored in clustering with outliers, where the goal is to ensure that the exclusion of certain data points as outliers does not unfairly impact individuals or communities. Traditional algorithms for clustering with outliers lack any control over which clients are discarded, which may inadvertently lead to biased decisions as shown empirically in \cite{almanza_k-clustering_2022} and in our experiments.

The idea of fairness in the presence of outliers was first introduced in the context of the vertex cover problem\footnote{Given a graph \(G=\lrp{V,E}\), find a minimum-weight subset \(U\subseteq V\) such that, for all \(e\in E\), at least one endpoint is in \(U\).} by Bera et al. \cite{bera_approximation_2014}. They called the problem as {\em partition vertex cover problem} and gave an $O(\log \omega)$ approximation for the problem which is best possible when $\omega$ is an arbitrary integer. Bandyapadhyay et al. \cite{Bandyapadhyay_lp_vc_2023} studied the problem when the vertices are unweighted. They achieve a $(2+\epsilon)$-approximation in time $n^{O(\omega/\epsilon)}$. Hong and Kao \cite{Hung_PartialVC_hypergraphs_2022} studied the problem in hypergraphs and presented a $(f \cdot H_\omega + H_\omega)$-approximation,
where $f$ is the maximum edge size and $H_\omega$ is the $\omega
^{th}$ harmonic number.
 
 In the context of facility location, Inamdar and Varadarajan \cite{Inamdar_LP_SC_2018} studied facility location with fair outliers. They showed that this problem is $(\log \omega)$-hard to approximate for an arbitrary $\omega$. They also presented a matching upper bound, $\mathcal{O}(\log \omega)$, approximation algorithm. Recently, Bajpai et al.~\cite{bajpai2025flfo} obtained a 4-approximation for the problem with constant number of groups. For the $k$-Center with fair outliers problem (Colorful $k$-Center), the first algorithm was proposed by Bandyapadhyay et al. \cite{bandyapadhyay_constant_2019}, providing a polynomial time 2-approximation while allowing $k+\omega$ centers. A subsequent result by Anegg et al. \cite{anegg_technique_2022} eliminated the violation in $k$ and obtained a 4-approximation in time $O(n^\omega)$, which was later improved to a 3-approximation in time $O(n^{\omega^2})$ by Jia et al. \cite{jia_fair_2022} .

 For $k$-Median with fair outliers, Almanza et al. \cite{almanza_k-clustering_2022} presented a bi-criteria approximation algorithm, that is, they presented an $O(1)$ approximation algorithm with $(3k+2)$ factor violation in outliers of every group. 
 
We continue this line of research for facility location and $k$-Median problem. As noted by Almanza et al. \cite{almanza_k-clustering_2022}, this setting is "quite flexible and allows one to enforce popular fairness constraints such as demographic parity \cite{barocas-ml-book}, calibration within groups \cite{Pleiss-fairness-calibration-2017}, statistical parity \cite{Dwork_awareness_2018}, diversity rules (e.g., 80 percent rule) \cite{biddle_adverse_2016}, and proportional representation rules \cite{Monroe-rep-1995}."

%\textcolor{red}{need to talk about lottery model somewhere..}
\section{Facility Location with Fair Outliers}
\label{sec:flfo}
In this section, we present a bi-criteria approximation algorithm for the Facility Location with Fair Outliers problem. We begin by formulating the problem as an integer linear program.

In this formulation, $y_i$ indicates whether facility $i$ is open, $z_j$ indicates whether client $j$ is an outlier, and $x_{ij}$ denotes whether client $j$ is served by facility $i$. Constraints \eqref{LPFLP_const1} and \eqref{LPFLP_const2} ensure that each client is either assigned to an open facility or designated as an outlier. Constraints \eqref{LPFLP_const3} impose bounds on the total number of outliers for each group.

\begin{align*}
\text{Minimize}\quad & \sum_{i \in \mathcal{F}} f_i y_i + \sum_{j \in C} \sum_{i \in \mathcal{F}} d_{ij} x_{ij} \\
\text{subject to} \quad \\
& \sum_{i \in \mathcal{F}} x_{ij} + z_j \geq 1 && \forall j \in C \tag{1} \label{LPFLP_const1} \\
& x_{ij} \leq y_i && \forall j \in C, i \in \mathcal{F} \tag{2} \label{LPFLP_const2} \\
& \sum_{j \in C_g} z_j \leq \ell_g && \forall g \in [\omega] \tag{3} \label{LPFLP_const3} \\
& x_{ij}, y_i, z_j \in \{0,1\} && \forall i \in \mathcal{F}, j \in C
\end{align*}

We relax the integer constraints, allowing $x_{ij}, y_i, z_j$ to take values in the continuous range $[0, 1]$, resulting in the LP relaxation. 

The LP formulation exhibits an unbounded integrality gap even when \(\omega = 1\). Consider the following example: suppose there is a single facility with opening cost \(f\), and \(M\) clients co-located at that facility. If the number of allowed outliers is \(M-1\), the LP can fractionally open the facility to an extent of \(1/M\) and serve each client to the same extent, effectively serving one full client in total. In contrast, any integral solution would need to fully open the facility and serve one client, incurring a cost of \(f\), which becomes arbitrarily large relative to the LP cost as \(M\) increases—thus leading to an unbounded integrality gap.
The example can be modified to allow a smaller number of outliers relative to the total number of clients by adding additional groups of clients served by facilities with zero opening cost. For these added groups, both the LP and the integral solutions coincide, so they do not affect the integrality gap, which still arises from the original group.

 Let $\rho^* = \langle x^*, y^*, z^* \rangle$ be an LP optimal solution for the LP. For any solution $\rho = \langle x, y, z \rangle$ to the LP, let $\text{cost}(\rho)$ denote its cost.

\subsection{Identifying the Outliers}

We now identify the set of clients to be treated as outliers in our solution. The idea is to declare a client as an outlier if it is predominantly an outlier in the LP solution $\rho^*$. For a given $\epsilon > 0$, we partition the client set $C$ into:
\begin{enumerate}
    \item[$(i)$] $C_o = \{ j \in C : z^*_j \geq 1 - \epsilon \}$
    \item[$(ii)$] $C_r = C \setminus C_o$
\end{enumerate}

We define a new solution $\hat{\rho} = \langle \hat{x}, \hat{y}, \hat{z} \rangle$ as follows:
\begin{enumerate}
    \item[$(i)$] For $j \in C_o$, set $\hat{z}_j = 1$ and $\hat{x}_{ij} = 0$ for all $i \in \mathcal{F}$.
    \item[$(ii)$] For $j \in C_r$, set $\hat{z}_j = z^*_j$ and $\hat{x}_{ij} = x^*_{ij}$ for all $i \in \mathcal{F}$.
    \item[$(iii)$] For all $i \in \mathcal{F}$, set $\hat{y}_i = y^*_i$.
\end{enumerate}

For any group $g \in [\omega]$, we observe:
\[
\sum_{j \in C_g} \hat{z}_j \leq \frac{1}{1 - \epsilon} \sum_{j \in C_g} z^*_j \leq (1 + 2\epsilon) \ell_g
\]
Moreover, $\text{cost}(\hat{\rho}) \leq \text{cost}(\rho^*)$.

\subsection{Reduction to Facility Location}

We now scale the assignment variables for $j \in C_r$ so that each such client is fully served, while maintaining feasibility. Let $\rho' = \langle x', y', z' \rangle$ be the updated solution:
\begin{enumerate}
    \item For all $j \in C_r$ and all $i \in \mathcal{F}$, set:
    \[
    x'_{ij} = \frac{\hat{x}_{ij}}{\sum_{i \in \mathcal{F}} \hat{x}_{ij}}.
    \]
    \item For each $i \in \mathcal{F}$, set:
    \[
    y'_i = \min \left\{1, \hat{y}_i \cdot \max_{j \in C_r : \hat{x}_{ij} > 0} \left\{ \frac{x'_{ij}}{\hat{x}_{ij}} \right\} \right\}.
    \]
\end{enumerate}

Note that for $j \in C_r$, $\sum_{i \in \mathcal{F}} \hat{x}_{ij} \geq \epsilon$, so:
\[
 x'_{ij} \leq \frac{\hat{x}_{ij}}{\epsilon}, \quad \text{and} \quad \hat{y}_i \leq y'_i \leq \frac{\hat{y}_i}{\epsilon}.
\]

It follows that $\rho'$ is a fractional feasible solution to the standard Facility Location problem with client set $C_r$, and:
\[
\text{cost}(\rho') \leq \frac{\text{cost}(\hat{\rho})}{\epsilon} \leq \frac{\text{cost}(\rho^*)}{\epsilon}.
\]

Let \(\langle \bar{x}, \bar{y} \rangle\) be a solution to the Facility Location problem on \(C_r\) with cost at most \(\alpha \cdot \mathrm{LP}_{C_r}\), where \(\mathrm{LP}_{C_r}\) is the cost of the optimal solution to the LP relaxation restricted to \(C_r\). Then, the solution \(\langle \bar{x}, \bar{y}, \hat{z} \rangle\) constitutes a \(\frac{\alpha}{\epsilon}\)-approximate solution to the Facility Location with Fair Outliers problem, violating the group outlier bounds by at most a factor of \((1 + 2\epsilon)\). Hence, we obtain the following theorem.

\begin{theorem}
    There exists a polynomial time $(\alpha/\epsilon)$-factor approximation algorithm with $(1+2\epsilon)$ factor violation in outliers for each group for facility location with fair outliers problem where $\alpha$ is the approximation factor for classical facility location problem, number of groups is arbitrary and $\epsilon>0$ is a fixed parameter.
\end{theorem}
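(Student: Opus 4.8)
The plan is to reduce Facility Location with Fair Outliers to the classical (outlier-free) Facility Location problem, losing a factor of $1/\epsilon$ in cost and a factor of $(1+2\epsilon)$ in each group's outlier budget, and then to invoke any $\alpha$-approximation for classical Facility Location as a black box. The whole reduction operates on an optimal fractional solution $\rho^* = \langle x^*, y^*, z^*\rangle$ of the LP relaxation, which is computable in polynomial time. The reason a direct rounding cannot work is the unbounded integrality gap noted above, so the essential idea is to first commit to an outlier designation and only then round the residual instance, where the gap no longer bites.

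First I would fix the outlier set using the partition $C = C_o \cup C_r$ with $C_o = \{j : z^*_j \geq 1-\epsilon\}$, and pass to the solution $\hat\rho$ that rounds $z^*_j$ up to $1$ on $C_o$ while leaving $C_r$ untouched. The per-group violation is then controlled by a counting argument: since every $j \in C_o$ contributes at least $1-\epsilon$ to $\sum_{j \in C_g} z^*_j \leq \ell_g$, each group has at most $\ell_g/(1-\epsilon)$ designated outliers, and $1/(1-\epsilon) \leq 1+2\epsilon$ precisely for $\epsilon \leq 1/2$, giving the $(1+2\epsilon)$ factor. Because zeroing the assignment variables of $C_o$ only deletes nonnegative cost terms, $\text{cost}(\hat\rho) \leq \text{cost}(\rho^*)$.

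The crux is the reduction to classical Facility Location on the residual set $C_r$. The key observation is that every $j \in C_r$ has $z^*_j < 1-\epsilon$, so by constraint \eqref{LPFLP_const1} its total fractional service $\sum_i \hat x_{ij} \geq 1 - z^*_j > \epsilon$ is bounded away from zero. This lets me normalize to $x'_{ij} = \hat x_{ij}/\sum_i \hat x_{ij}$, fully serving each residual client, and scale the opening variables to $y'_i$ accordingly, paying at most a $1/\epsilon$ blowup coordinatewise. The main obstacle, and the only genuinely delicate point, is verifying that $\rho' = \langle x', y'\rangle$ remains a feasible fractional solution to classical Facility Location on $C_r$ — in particular that the coupling constraint $x'_{ij} \leq y'_i$ survives the scaling. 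This is exactly why $y'_i$ is defined as $\hat y_i$ times the largest per-client scaling factor over clients served by $i$: that choice dominates every $x'_{ij}$ simultaneously. The same $\epsilon$ lower bound caps the cost increase, yielding $\text{cost}(\rho') \leq \text{cost}(\hat\rho)/\epsilon \leq \text{cost}(\rho^*)/\epsilon$.

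Finally I would run an $\alpha$-approximation for classical Facility Location on $C_r$. Since $\rho'$ is feasible, the LP optimum $\mathrm{LP}_{C_r}$ on $C_r$ satisfies $\mathrm{LP}_{C_r} \leq \text{cost}(\rho') \leq \text{cost}(\rho^*)/\epsilon$, so the returned integral $\langle \bar x, \bar y\rangle$ costs at most $\alpha\cdot\mathrm{LP}_{C_r} \leq (\alpha/\epsilon)\,\text{cost}(\rho^*)$. Combining $\langle \bar x, \bar y\rangle$ with the outlier designation $\hat z$ produces a bicriteria solution in which all of $C_r$ is served integrally and $C_o$ is discarded; its cost is at most $(\alpha/\epsilon)\,\text{cost}(\rho^*) \leq (\alpha/\epsilon)\,\O$ because the LP relaxation lower-bounds the integral optimum, while each group's outlier bound is violated by at most $(1+2\epsilon)$. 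Everything except the feasibility check in the scaling step is a routine chaining of the inequalities already assembled.
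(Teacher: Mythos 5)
Your proposal matches the paper's own argument essentially step for step: the same partition $C_o = \{j : z^*_j \geq 1-\epsilon\}$, the same rounding of $\hat z$ with the $1/(1-\epsilon) \leq 1+2\epsilon$ counting bound, the same normalization $x'_{ij} = \hat x_{ij}/\sum_i \hat x_{ij}$ with the $\min\{1,\cdot\}$-capped scaling of $y'_i$ to preserve the coupling constraint, and the same black-box invocation of an $\alpha$-approximation on $C_r$. The only addition is your explicit remark that the $(1+2\epsilon)$ bound requires $\epsilon \leq 1/2$, which the paper leaves implicit; the argument is correct as written.
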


Note that once the set of outliers has been identified, any standard facility location algorithm~\cite{jain_greedy_2003, arya_local_2004, ChudakS03} can be applied to determine which facilities to open for the remaining clients. In our experiments, after removing the outliers, we use a simple heuristic for facility location in which we directly round the fractional facility openings and assign remaining clients to nearest facility. The variable values are in fact close to 0/1. This is the primary reason we do not violate the outlier or cost constraints once we solve the LP.

\section{$k$-Median with Fair Outliers}
\label{sec:kmfo}

In this section, we focus on the \(k\)-Median with Fair Outliers problem. We develop a bi-criteria approximation algorithm leveraging the well-studied framework of \(k\)-Median with Penalties. The core idea is to encode fairness constraints via appropriately scaled penalties and then apply a known constant-factor approximation algorithm for the penalty-based variant. This approach allows us to recover both cost and fairness guarantees in the original problem.

We begin by defining the $k$-Median with Penalties problem.

\begin{definition}[\(k\)-Median with Penalties]
Given a metric space \((\mathcal{M}, d)\), a set of facilities \(\mathcal{F} \subseteq \mathcal{M}\), a set of clients \(C \subseteq \mathcal{M}\), penalties \(p_j \geq 0\) for each client \(j \in C\), and an integer \(k\) bounding the number of open facilities, the goal is to select (i) a subset \(F \subseteq \mathcal{F}\) with \(|F| \leq k\), and (ii) a subset \(C' \subseteq C\) of outliers paying penalties, minimizing the total cost
\[
\sum_{j \in C \setminus C'} d_{j,F} + \sum_{j \in C'} p_j,
\]
where \(d_{j,F}\) is the distance from client \(j\) to its nearest open facility in \(F\).
\end{definition}

Charikar et al.~\cite{charikar_algorithms_2001} provide a 4-approximation algorithm for this problem:

\begin{theorem}[\cite{charikar_algorithms_2001}]\label{thm:k-med-pen}
Let \(C\) be the assignment cost of the returned solution and \(P\) be the total penalty paid by outliers. If \(\O_P\) denotes the cost of the optimal solution to the penalty-based problem, then
\[
C + 4P \leq 4 \O_P.
\]
\end{theorem}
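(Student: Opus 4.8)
The plan is to follow the Lagrangian-relaxation route together with a primal--dual (Jain--Vazirani style) subroutine, which is the canonical way to obtain constant-factor guarantees for $k$-median-type objectives. First I would write the natural LP relaxation of $k$-median with penalties---variables $x_{ij}, y_i$ together with outlier variables $z_j$, objective $\sum_{ij} d_{ij} x_{ij} + \sum_j p_j z_j$, covering constraints $\sum_i x_{ij} + z_j \ge 1$, linking constraints $x_{ij} \le y_i$, and the cardinality constraint $\sum_i y_i \le k$---and then dualize only $\sum_i y_i \le k$ into the objective with a multiplier $\lambda \ge 0$. This turns the problem into \emph{facility location with penalties} in which every facility carries a uniform opening cost $\lambda$, a problem for which a clean primal--dual algorithm exists.

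The core subroutine is a Lagrangian-multiplier-preserving primal--dual algorithm for facility location with penalties. In its dual one raises all client dual variables $\alpha_j$ uniformly; client $j$ contributes $\max(0,\alpha_j - d_{ij})$ toward facility $i$, a facility goes tight when its contributions sum to $\lambda$, and---this is the penalty-specific ingredient---$\alpha_j$ freezes as soon as it reaches $p_j$, after which $j$ simply pays its penalty. Opening a maximal independent set of tight facilities and connecting each client either directly or through a neighboring open facility yields the key inequality
\[
C + 3P + 3\lambda\,k_\lambda \le 3\sum_j \alpha_j,
\]
where $C$ and $P$ are the connection and penalty costs of the returned solution and $k_\lambda$ is the number of facilities opened: directly connected clients pay at most $\alpha_j$, indirectly connected clients at most $3\alpha_j$, penalized clients exactly $\alpha_j = p_j$, and the opened facilities' cost is absorbed by the directly connected clients' dual slack. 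Since the optimal $k$-median-with-penalties solution is itself feasible for facility location with penalties at facility cost $\lambda$, weak duality gives $\sum_j \alpha_j \le \O_P + k\lambda$, so whenever $\lambda$ can be tuned to open exactly $k_\lambda = k$ facilities we immediately obtain $C + 3P \le 3\O_P$, already stronger than the claimed bound.

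The remaining and main obstacle is that no single $\lambda$ need produce exactly $k$ open facilities. I would binary-search on $\lambda$ to bracket a threshold where the number opened jumps from some $k_2 \le k$ (for slightly larger $\lambda$) to some $k_1 \ge k$ (for slightly smaller $\lambda$), producing two solutions $S_1,S_2$ with dual certificates $\alpha^{(1)},\alpha^{(2)}$. Taking the convex combination with weights $a=(k-k_2)/(k_1-k_2)$ and $b=1-a$, the cardinality terms cancel and the per-$\lambda$ inequalities give $a\,C_1 + b\,C_2 + 3P \le 3\O_P$ for the fractional mixture; the work is then to round this to a \emph{single} integral solution with at most $k$ facilities. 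One keeps the $k_2$ facilities of $S_2$, augments with $k-k_2$ facilities of $S_1$ chosen closest to distinct retained facilities, and reroutes the displaced clients via the triangle inequality while reconciling the two outlier sets so no penalty is double-counted. It is precisely this rerouting that inflates the connection cost by the combination factor---while the penalty term is carried through at its Lagrangian factor---yielding the asymmetric guarantee $C + 4P \le 4\O_P$. I expect this combination step to be the delicate part: bounding the blow-up of the rerouted connection cost, matching up outliers across $S_1$ and $S_2$, and checking that the convex weights convert the two per-$\lambda$ bounds into the stated inequality. By comparison, the primal--dual growth and the dual-feasibility bound $\sum_j \alpha_j \le \O_P + k\lambda$ are routine.
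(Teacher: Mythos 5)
The paper does not actually prove this statement: it is imported verbatim from Charikar et al.~\cite{charikar_algorithms_2001}, so there is no in-paper argument to compare yours against. Judged on its own terms, your plan follows exactly the route of the cited source---Lagrangian relaxation of the cardinality constraint, a Lagrangian-multiplier-preserving primal--dual subroutine for facility location with penalties in which $\alpha_j$ freezes at $p_j$, the inequality $C + 3P + 3\lambda k_\lambda \le 3\sum_j \alpha_j$, weak duality against the optimal fair solution viewed as a feasible FL-with-penalties solution, and a binary search on $\lambda$. Up to and including the lucky case $k_\lambda = k$ (where you correctly get the stronger bound $C + 3P \le 3\O_P$), the argument is sound.

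The gap is that the proof stops exactly where the stated constant comes from. The entire content of the factor $4$ in $C + 4P \le 4\O_P$ lives in the bi-point combination step, and you explicitly defer it. Two things in particular are asserted rather than derived. First, the convex-combination inequality should involve two distinct penalty sets, i.e.\ $a C_1 + b C_2 + 3\lrp{a P_1 + b P_2} \le 3\O_P$; producing a single integral solution requires committing to one outlier set, and bounding its penalty by the fractional mixture $aP_1+bP_2$ is not automatic (the two solutions may penalize very different clients). Second, the claim that the rerouting ``inflates the connection cost by the combination factor while the penalty term is carried through'' is precisely the lemma one must prove; in Charikar et al.\ this requires a specific choice of which facilities of $S_1$ to retain, a matching of $S_1$-facilities to $S_2$-facilities for the rerouting, and a (randomized or derandomized) selection argument in the Charikar--Guha style to avoid the naive factor $2\times 3 = 6$. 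Without carrying out that step---and verifying that the penalties of the clients who are outliers in the final solution are charged at coefficient $4$ on the left while $\O_P$ picks up only a factor $4$ on the right---the final inequality does not follow from what you have written.
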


\subsection{Our Algorithm}

We design our approximation for \(k\)-Median with Fair Outliers as follows:

\begin{enumerate}
    \item Guess the optimal cost \(\O_C\) of the \(k\)-Median with Fair Outliers instance within a factor \((1+\varepsilon)\).
    \item For each group \(g \in [\omega]\) and each client \(j \in C_g\), set the penalty
    \[
    p_j = \frac{\O_C}{\gamma \cdot \ell_g},
    \]
    where \(\gamma > 0\) is a parameter to be chosen.
    \item Solve the corresponding \(k\)-Median with Penalties instance with penalties \(\{p_j\}\) using the 4-approximation algorithm from Theorem \ref{thm:k-med-pen}.
    \item Let \(F\) be the set of opened facilities, \(C_o\) be the set of outliers paying penalties, and \(\sigma\) be the assignment of remaining clients \(C \setminus C_o\), then output \(F\) as the set of opened facilities, \(C_o\) as the set of outliers, and \(\sigma\) as the assignment of remaining clients \(C \setminus C_o\) for the $k$-Median with Fair Outliers instance.
\end{enumerate}

\subsection{Analysis}

\begin{proposition}\label{prop:k-med-guess}
For colorful \(k\)-median with \(\omega\) groups and \(\ell_g\) outliers per group, the optimal cost \(\O_C\) can be guessed within a factor \((1+\varepsilon)\) using polynomially many trials.
\end{proposition}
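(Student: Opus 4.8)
The plan is to exploit the fact that $\O_C$ is a sum of a bounded number of client–facility distances, so it lives in a geometrically bounded range, and then to sweep a geometric grid of candidate values whose cardinality is polynomial for fixed $\varepsilon$.

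First I would record elementary bounds on the optimum. Let $d_{\max} = \max_{j \in C,\, i \in \mathcal{F}} d_{j,i}$, and let $d^+_{\min}$ be the smallest strictly positive distance $d_{j,i}$. Since the objective of $k$-median with fair outliers is a sum of at most $|C|$ assignment distances, we immediately have $\O_C \le |C|\cdot d_{\max}$. Moreover, if $\O_C > 0$ then at least one served client incurs a positive distance, so $\O_C \ge d^+_{\min}$. Thus, whenever the optimum is nonzero it lies in the interval $[\,d^+_{\min},\ |C|\cdot d_{\max}\,]$.

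Second I would discretize this interval geometrically. Consider the candidate values
\[
G_t = d^+_{\min}\,(1+\varepsilon)^t, \qquad t = 0, 1, \ldots, T,
\]
where $T = \big\lceil \log_{1+\varepsilon}\!\big(|C|\cdot d_{\max}/d^+_{\min}\big)\big\rceil$. For any $\O_C$ in the interval there is an index $t$ with $G_t \le \O_C \le G_{t+1} = (1+\varepsilon)G_t$, so the guess $G_t$ satisfies $\O_C/(1+\varepsilon) \le G_t \le \O_C$, i.e.\ it approximates the optimum within a factor $(1+\varepsilon)$. I would run the remainder of the algorithm for every candidate $G_t$ (and for the special value $0$, handled below) and retain the best feasible output; correctness of the overall algorithm then follows from the run that uses the correct guess. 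The number of trials is $T+1 = \ln\!\big(|C|\cdot d_{\max}/d^+_{\min}\big)/\ln(1+\varepsilon) + O(1)$: for a fixed $\varepsilon > 0$ the factor $1/\ln(1+\varepsilon)$ is a constant, while $\ln\!\big(|C|\cdot d_{\max}/d^+_{\min}\big)$ is linear in the bit-length of the input, since the distances are rationals represented by polynomially many bits. Hence the sweep uses polynomially many trials.

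The one point that requires explicit care—and the only thing resembling an obstacle—is the case $\O_C = 0$, which is not captured by the geometric grid. It corresponds to being able to open at most $k$ facilities so that, in every group $g$, all but at most $\ell_g$ clients are co-located with an open facility at distance $0$. I would simply add a single trial with $\O_C = 0$ to the sweep. Beyond this, the only subtlety is verifying that $\log_{1+\varepsilon}(\cdot)$ yields a \emph{polynomial} rather than merely pseudopolynomial number of trials, which holds because the ratio $|C|\cdot d_{\max}/d^+_{\min}$ has polynomial bit-length; I do not expect any genuine difficulty in the argument.
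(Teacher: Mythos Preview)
Your proposal is correct and follows essentially the same approach as the paper: bound $\O_C$ between the smallest and largest relevant distances and sweep a geometric grid of $(1+\varepsilon)$-spaced guesses over that interval. If anything, your version is more careful than the paper's---you explicitly handle the $\O_C=0$ case and justify polynomiality via the bit-length of $d_{\max}/d^+_{\min}$, whereas the paper uses the (possibly zero) minimum distance $d_{\min}$ and states the trial count as $O(\log_{1+\varepsilon} n)$ without accounting for the distance ratio.
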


\begin{proof}
Let \(d_{\min}\) and \(d_{\max}\) denote the smallest and largest pairwise distances in the metric, and \(\ell = \sum_{g=1}^\omega \ell_g\). Since each non-outlier client contributes at least \(d_{\min}\) and at most \(d_{\max}\) to the cost, we have
\[
(n - \ell) d_{\min} \leq \O_C \leq (n - \ell) d_{\max}.
\]
By considering powers of \((1+\varepsilon)\) within this range, we require only \(O\big(\log_{1+\varepsilon} n\big)\) guesses to identify an estimate within a \((1+\varepsilon)\) factor.
\end{proof}

\begin{theorem}
For the \(k\)-Median with Fair Outliers problem with \(\omega\) groups, there exists a polynomial-time algorithm that returns a solution with cost at most
\[
4\left(1 + \frac{\omega}{\gamma}\right) \O_C,
\]
and with at most \((\omega + \gamma) \ell_g\) outliers in each group \(g\).
\end{theorem}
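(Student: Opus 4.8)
The plan is to reduce the entire analysis to the single inequality $C + 4P \leq 4\,\O_P$ guaranteed by Theorem \ref{thm:k-med-pen}, where $C$ and $P$ are the assignment cost and total penalty of the solution returned in Step 3, and $\O_P$ is the optimum of the penalty instance built in Step 2. The heart of the argument is to show that this penalty optimum is controlled by the fair-outlier optimum, namely $\O_P \leq \lrp{1 + \omega/\gamma}\,\O_C$; both the cost guarantee and the per-group outlier violation then fall out of the two halves of $C + 4P \leq 4\,\O_P$.

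First I would bound $\O_P$ from above by exhibiting a single feasible penalty solution. Take an optimal solution to the $k$-Median with Fair Outliers instance: it opens at most $k$ facilities, designates outlier sets $C'_g \subseteq C_g$ with $|C'_g| \leq \ell_g$, and incurs assignment cost $\O_C$. Interpreting exactly these facilities and these outliers as a solution to the penalty instance, the assignment cost is unchanged at $\O_C$, while the penalty paid is
\[
\sum_{g=1}^{\omega} \sum_{j \in C'_g} p_j
= \sum_{g=1}^{\omega} |C'_g|\cdot \frac{\O_C}{\gamma\,\ell_g}
\leq \sum_{g=1}^{\omega} \frac{\O_C}{\gamma}
= \frac{\omega\,\O_C}{\gamma},
\]
using $|C'_g| \leq \ell_g$ and the penalty value fixed in Step 2. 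Hence $\O_P \leq \O_C + \omega\,\O_C/\gamma = \lrp{1 + \omega/\gamma}\,\O_C$.

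With this in hand both guarantees follow. Since the returned solution is used as-is in the fair-outlier instance, its cost is exactly the assignment cost $C$, and because $P \geq 0$,
\[
C \leq C + 4P \leq 4\,\O_P \leq 4\lrp{1 + \tfrac{\omega}{\gamma}}\,\O_C,
\]
which is the claimed cost bound. For the outlier violation, I would drop the $C$ term from the same inequality to obtain $P \leq \O_P \leq \lrp{1 + \omega/\gamma}\,\O_C$. Fixing a group $g$ and letting $C_o \cap C_g$ be its returned outliers, nonnegativity of all penalties means the penalty contributed by group $g$ alone is at most $P$, so
\[
|C_o \cap C_g|\cdot \frac{\O_C}{\gamma\,\ell_g} \leq P \leq \lrp{1 + \tfrac{\omega}{\gamma}}\,\O_C .
\]
Rearranging gives $|C_o \cap C_g| \leq \gamma\,\ell_g\lrp{1 + \omega/\gamma} = (\omega + \gamma)\,\ell_g$, as required.

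The one genuine design choice---and the step I expect to carry the weight---is the penalty value $p_j = \O_C/(\gamma\,\ell_g)$ itself, which simultaneously serves the two competing goals above: it is small enough (scaled by $1/\ell_g$) that the optimal fair solution can pay for all $\ell_g$ of its group-$g$ outliers at total cost $\O_C/\gamma$, keeping $\O_P$ near $\O_C$; yet large enough that any group accumulating more than $(\omega+\gamma)\,\ell_g$ outliers would drive $P$ past the budget permitted by $C + 4P \leq 4\,\O_P$. The parameter $\gamma$ is precisely the knob trading the cost factor $1+\omega/\gamma$ against the violation $\omega+\gamma$. The remaining wrinkle is that $\O_C$ is unknown, but Proposition \ref{prop:k-med-guess} supplies a guess within a $(1+\varepsilon)$ factor over polynomially many trials, so running the procedure for each guess and keeping the best solution loses only a negligible $(1+\varepsilon)$ factor that can be absorbed into the statement.
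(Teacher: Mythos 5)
Your proposal is correct and follows essentially the same route as the paper: bound \(\O_P \leq (1+\omega/\gamma)\O_C\) by plugging the optimal fair solution into the penalty instance, then split the guarantee \(C + 4P \leq 4\O_P\) into its two nonnegative parts to extract the cost bound and, via the per-group penalty contribution, the \((\omega+\gamma)\ell_g\) violation. The handling of the unknown \(\O_C\) by guessing also matches the paper.
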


\begin{proof}
Consider the optimal fair solution with cost \(\O_C\) and \(\ell_g\) outliers per group \(g\). Using the penalty assignment defined above, this solution is feasible for the \(k\)-Median with Penalties problem so,
\[
\O_P \leq \O_C+\sum_{g=1}^\omega\frac{\O_C}{\gamma\ell_g}\ell_g = \O_C\lrp{1+\frac{\omega}{\gamma}}.
\]

We can also take this instance of \(k\)-medians with penalties and run it through the black box algorithm. It returns a solution of cost \(C\) with \(\ell'\) outliers. We partition the outliers by color. By Theorem \ref{thm:k-med-pen} and the above inequality, 
\[
C+4\sum_{c=1}^\omega\frac{\O_C}{\gamma\ell_c}\ell'_g \leq 4\O_P \leq 4\O_C\lrp{1+\frac{\omega}{\gamma}}.
\]
Since all terms are non-negative, we have 
\[
C \leq 4\O_C\lrp{1+\frac{\omega}{\gamma}}
\]
and 
\[
\sum_{c=1}^\omega\frac{1}{\gamma\ell_g}\ell'_g \leq 1+\frac{\omega}{\gamma}.
\]
Again, each term is non-negative, so
\[
\ell'_g \leq \ell_g\lrp{\omega+\gamma}.
\]

By Proposition \ref{prop:k-med-guess}, we can guess \(\O_C\) within a \(\lrp{1+\vare}\) factor, so the guarantees on \(C\) and \(\ell'_g\) for each group \(g\) hold within the same factor.
\end{proof}

\section{Experiments}
\label{sec:exp}
Our source code can be accessed on \textit{github}\footnote{anonymous.4open.science/r/FLO-final-8A20/}. The algorithms are implemented in Julia and run on a ThinkPad X1 laptop. In all our experiments, the input points are in the
Euclidean space, and we use the $\ell_2$ distance function. 

\textbf{Datasets.}
We use publicly available datasets, as preprocessed in \cite{almanza_k-clustering_2022}. The \textit{Adult} dataset~\cite{uci_adult} contains U.S. census information, where we use the \texttt{sex} attribute as the group label. The \textit{Bank} dataset~\cite{uci_bank} comprises data from a direct marketing campaign by a bank, with group labels based on the \texttt{marital status} attribute. For both datasets, we retain only the available numeric attributes, each of which is normalized independently.

For the facility location experiments, we randomly sample 4,500 data points and select 100 potential facilities from each dataset. For $k$-median, due to its computational overhead on a non-commercial machine, we only use a smaller subset of 500 points and set $k = 5$.

Note that both datasets are fully unsupervised — that is, they lack ground-truth cluster labels. Consequently, we cannot directly evaluate clustering accuracy. To address this limitation, we also evaluate our algorithms on synthetic datasets with known cluster structure. The \textit{Synthetic} dataset includes two group designations: ``in'' and ``out''. In-group members are drawn from the distribution $\cN(0, 10)$, while out-group members are drawn from $\cN(10, 20)$. Facilities are distributed according to the in-group distribution. A facility located within 10 units of the origin incurs a cost of 80; otherwise, its cost is 40. We generate 500 ``in'' and 50 ``out'' points for both the facility location and $k$-median experiments. 

%For the facility location experiments, we randomly sample 4,500 data points and select 100 potential facilities from adult and bank datasets. For $k$-median, due to its computational overhead on a non-commercial machine, we only run our experiments on the synthetic dataset.

\subsection{Facility Location}
\begin{figure}[t]
    \centering
    \begin{subfigure}[b]{0.8\linewidth}
        \centering
        \includegraphics[width=\linewidth]{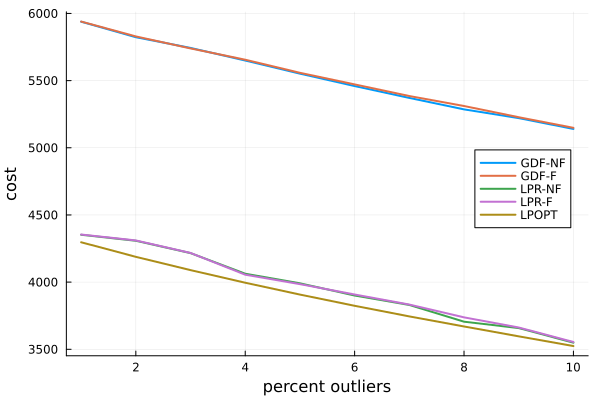}
        \caption{Cost vs.\ percentage of outliers.}
        \label{fig:FL-cost}
    \end{subfigure}
    
    \vspace{1em} % optional spacing between subfigures
    
    \begin{subfigure}[b]{0.8\linewidth}
        \centering
        \includegraphics[width=\linewidth]{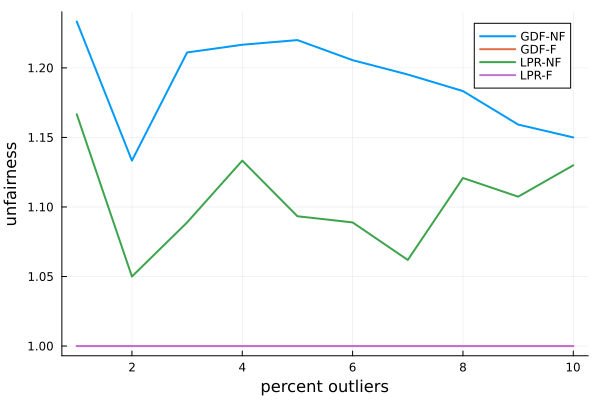}
        \caption{Unfairness vs.\ percentage. The lines for \textbf{GDF-F} and \textbf{LPR-F} overlap at 1.0.}
        \label{fig:FL-unf}
    \end{subfigure}
    
    \caption{Performance of different algorithms on the \textit{adult} dataset grouped by \textit{sex} for the facility location problem. We use the dataset with \(\varepsilon=0.1\), \(n=4500\), \(m=100\), and facility cost set uniformly at \(d_{\max}=18.32\).}
    \label{fig:FL-combined}
\end{figure}

\begin{figure}[t]
    \centering
    \begin{subfigure}[b]{0.8\linewidth}
        \centering
        \includegraphics[width=\linewidth]{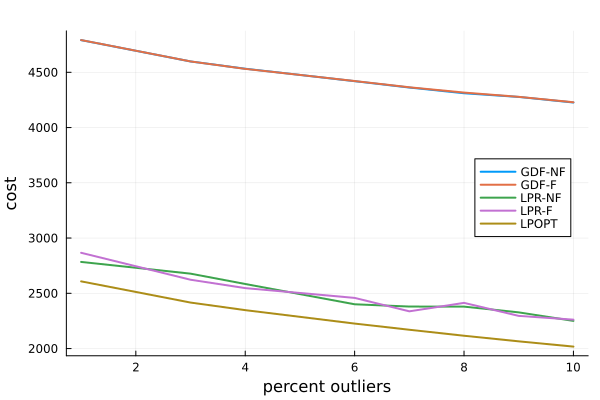}
        \caption{Cost vs.\ percentage of outliers.}
        \label{fig:FLbank-cost}
    \end{subfigure}
    
    \vspace{1em}
    
    \begin{subfigure}[b]{0.8\linewidth}
        \centering
        \includegraphics[width=\linewidth]{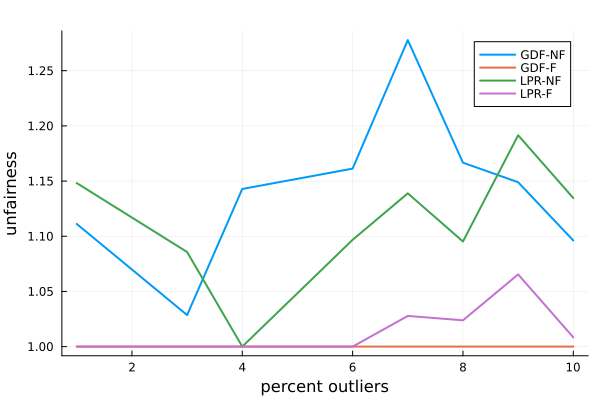}
        \caption{Unfairness vs.\ percentage of outliers.}
        \label{fig:FL-unf-bank}
    \end{subfigure}
    
    \caption{Performance of different algorithms on the \textit{bank} dataset grouped by \textit{marital status} for the facility location problem. We use the dataset with \(\varepsilon=0.5\), \(n=4520\), \(m=100\), and facility cost set uniformly at \(d_{\max}=24.57\).}
    \label{fig:FL-bank-combined}
\end{figure}

\begin{figure}[t]
    \centering

    \begin{subfigure}[b]{0.8\linewidth}
        \centering
        \includegraphics[width=\linewidth]{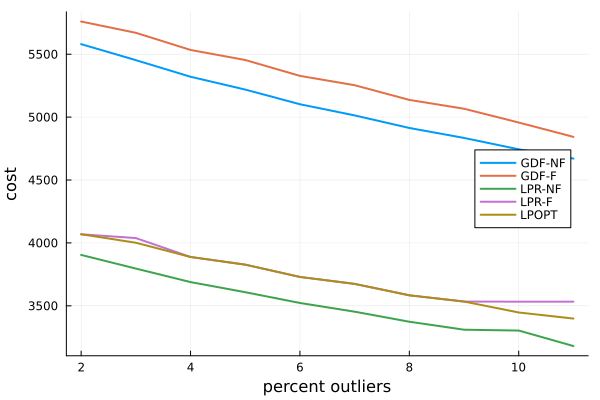}
        \caption{Cost vs.\ percentage of outliers.}
        \label{fig:FL-costSynth}
    \end{subfigure}
    
    \vspace{1em}
    
    \begin{subfigure}[b]{0.8\linewidth}
        \centering
        \includegraphics[width=\linewidth]{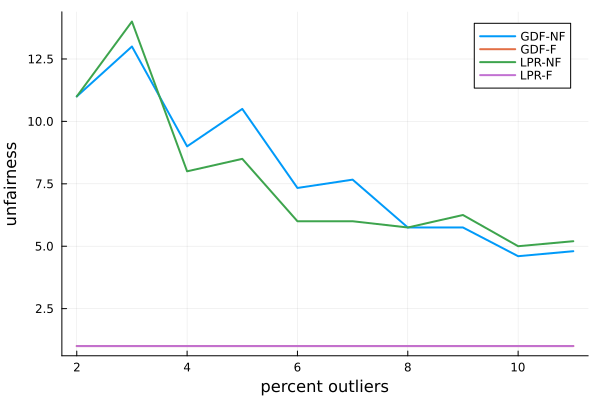}
        \caption{Unfairness vs.\ percentage of outliers. The lines for \textbf{GDF-F} and \textbf{LPR-F} overlap near \(1.0\).}
        \label{fig:FL-unfSynth}
    \end{subfigure}

    \caption{Performance of different algorithms on the \textit{synthetic} dataset with known group structure for the facility location problem. We use \(\varepsilon = 0.1\).}
    \label{fig:FL-synth-combined}
\end{figure}

We evaluate two algorithms for the Facility Location with Fair Outliers problem---our \emph{bi-criteria algorithm} based on LP rounding, and a \emph{greedy heuristic} that modifies the classical dual fitting approach to enforce group-wise fairness (does not involve solving an LP and is more scalable). We compare the cost and fairness of these algorithms against classical facility location with outliers variants of the two algorithms that do not enforce group-level fairness constraints. Additionally, we compare the solution costs of all algorithms to the LP optimal value for benchmarking.

To generate candidate facility locations, we apply the $k$-means algorithm of Khandelwal et al.~\cite{khandelwal_faster_2017} to obtain 100 potential facilities. Since clients rarely connect to distant facilities, we restrict the set of assignment variables \(x_{ij}\) to pairs with distances below the median of all client--facility distances. These preprocessing steps significantly improve the runtime of all algorithms. We assume uniform facility opening costs.

We next describe the four algorithms in detail:
\begin{itemize}
    \item \textbf{LPR-F (LP Rounding with Fairness)}: This is our main bi-criteria algorithm. It begins by solving the fair LP relaxation with group-level outlier constraints and declares clients outliers if they are outliers to a large extent in the LP solution. In the second phase, it applies a simple heuristic for facility location problem: facilities with fractional opening values above a certain threshold are opened, and each non-outlier client is assigned to its nearest open facility.
    
    \item \textbf{LPR-NF (LP Rounding, Non-Fair)}: A baseline variant of LPO+R that does not enforce fairness. The LP includes only a single constraint on the total number of outliers. The rounding process follows as in LPO+R.

    \item \textbf{GDF-F (Greedy Dual Fitting with Fairness)}: This algorithm modifies the classical greedy dual fitting method of Jain et al.~\cite{jain_greedy_2003} to enforce fairness in outlier selection. As before, each client $j$ maintains a variable $\alpha_j$, which increases uniformly over time. Facilities track surplus from clients and are opened when the accumulated surplus equals the opening cost. Once open, a facility's cost is reset to zero. We run the process until the required number of clients are connected in each group. If, during an iteration, the coverage requirement for any group is exceeded, we remove the remaining clients from that group in subsequent iterations. Note that, \textit{this algorithm does not require solving a linear program and hence is more scabale.}

    \item \textbf{GDF-NF (Greedy Dual Fitting, Non-Fair)}: The classical greedy facility location with outliers algorithm~\cite{jain_greedy_2003} without fairness constraints. Clients are connected greedily using the dual fitting process until the overall outlier budget is met, with no group-wise consideration.

\end{itemize}

%*************** July 23 ********************
%To evaluate fairness, we use the \emph{unhappiness} metric defined as:
%\[
%\text{disparity} = \frac{\max_g \ell_g / \ell_g'}{\min_g \ell_g / \ell_g'},
%\]
%where $\ell_g$ is the number of outliers allowed for the group $g$, and $\ell_g'$ is the number of outliers left by the algorithm for group $g$. A value close to $1$ indicates a fair solution.

\paragraph{Fairness Metric.}
To evaluate fairness, we use the \emph{unfairness} metric defined as:
\[
\text{unfairness}\footnote{We define unfairness to be at least $1$ since leaving fewer than the allowed number of outliers for a group (i.e., $\ell'_g < \ell_g$) does not violate the fairness constraint.} = \max \left\{ 1, \max_g \frac{\ell'_g}{\ell_g} \right\},
\]
where $\ell_g$ is the number of outliers allowed for group $g$, and $\ell'_g$ is the number of outliers left by the algorithm for group $g$. Values close to 1 indicate a fair solution.

We measure the \emph{solution cost} as the sum of facility opening and client connection costs. To understand the behavior of the algorithms as the number of allowable outliers varies (expressed as a percentage of the total number of clients), we plot 
\begin{enumerate}
    \item the solution costs for different algorithms as a function of the outlier budget, and
    \item the "unfairness" of different algorithms as a function of the outlier budget.
\end{enumerate}

\subsubsection{Results and Insights}

\begin{itemize}
    \item Both fair algorithms (LPR-F and GDF-F) achieve \emph{unfairness values equal to 1 in almost all cases}, with the only exception being the \textit{bank} dataset, where the maximum unfairness reaches 1.12. In contrast, the non-fair baselines (LPR-NF and GDF-NF) exhibit significantly higher group-level unfairness: up to 1.67 and 1.23 for the \textit{adult} dataset, 1.19 and 1.28 for the \textit{bank} dataset, and as high as 14 and 13 for the \textit{synthetic} dataset (see Figures~\ref{fig:FL-unf},~\ref{fig:FL-unf-bank}, and~\ref{fig:FL-unfSynth}).
    \item The \emph{increase in cost} due to enforcing fairness is \emph{negligible} (see Figures \ref{fig:FL-cost}, \ref{fig:FLbank-cost} and \ref{fig:FL-costSynth}):
    \begin{itemize}
        \item \textbf{LPR} (with and without fairness) is consistently close to LP optimal.
        \item \textbf{GDF-F} has costs comparable to the classical greedy baseline.
    \end{itemize}
\end{itemize}

These results demonstrate that \emph{fairness can be achieved without a substantial increase in cost}. Moreover, our bi-criteria algorithm performs well beyond worst-case guarantees—achieving low cost and near-perfect fairness. The LP solution—after the outliers are removed—exhibits opening and assignment variables that are typically very close to 0 or 1. This near-integrality of the solution enables a simple rounding heuristic to perform remarkably well in practice. By leveraging the structure of the LP output, our heuristic can make effective discrete decisions with minimal loss in quality, contributing both to the algorithm’s speed and its empirical effectiveness.

\subsection{$k$-Median}

\begin{figure}[t]
    \centering

    \begin{subfigure}[b]{0.8\linewidth}
        \centering
        \includegraphics[width=\linewidth]{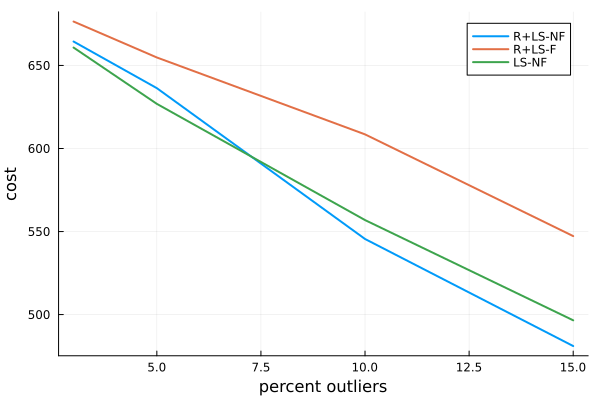}
        \caption{Cost vs.\ percentage of outliers.}
        \label{fig:k-costAdult}
    \end{subfigure}
    
    \vspace{1em}
    
    \begin{subfigure}[b]{0.7\linewidth}
        \centering
        \includegraphics[width=\linewidth]{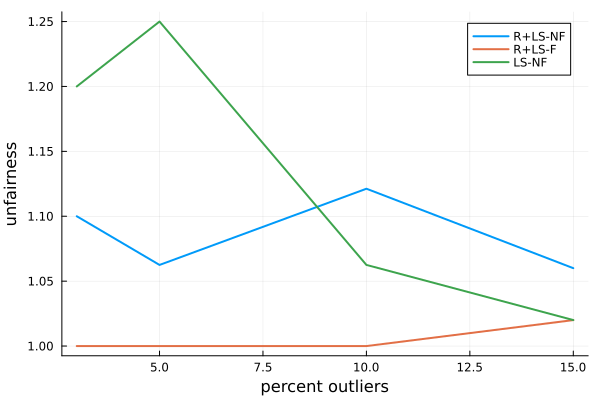}
        \caption{Unfairness vs.\ percentage of outliers.}
        \label{fig:k-unfAdult}
    \end{subfigure}

    \caption{Performance of different algorithms on the \textit{adult} dataset grouped by \textit{sex} for the $k$-median problem.}
    \label{fig:k-adult-combined}
\end{figure}

\begin{figure}[t]
    \centering

    \begin{subfigure}[b]{0.8\linewidth}
        \centering
        \includegraphics[width=\linewidth]{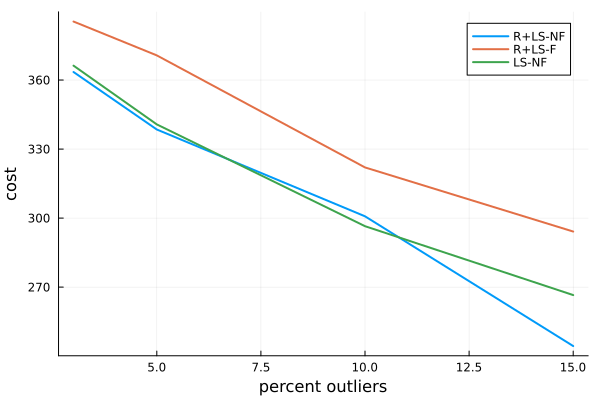}
        \caption{Cost vs.\ percentage of outliers.}
        \label{fig:k-costBank}
    \end{subfigure}
    
    \vspace{1em}
    
    \begin{subfigure}[b]{0.8\linewidth}
        \centering
        \includegraphics[width=\linewidth]{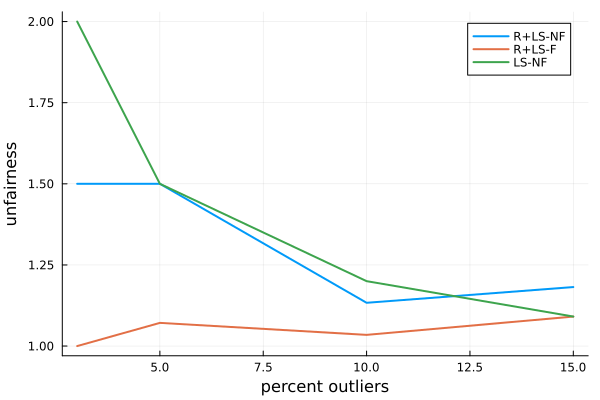}
        \caption{Unfairness vs.\ percentage of outliers.}
        \label{fig:k-unfBank}
    \end{subfigure}

    \caption{Performance of different algorithms on the \textit{bank} dataset grouped by \textit{marital status} for the $k$-median problem.}
    \label{fig:k-bank-combined}
\end{figure}

\begin{figure}[t]
    \centering

    \begin{subfigure}[b]{0.8\linewidth}
        \centering
        \includegraphics[width=\linewidth]{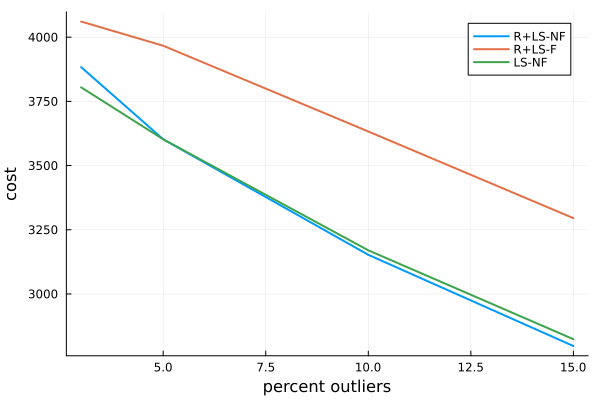}
        \caption{Cost vs.\ percentage of outliers.}
        \label{fig:k-costSynth}
    \end{subfigure}
    
    \vspace{1em}
    
    \begin{subfigure}[b]{0.8\linewidth}
        \centering
        \includegraphics[width=\linewidth]{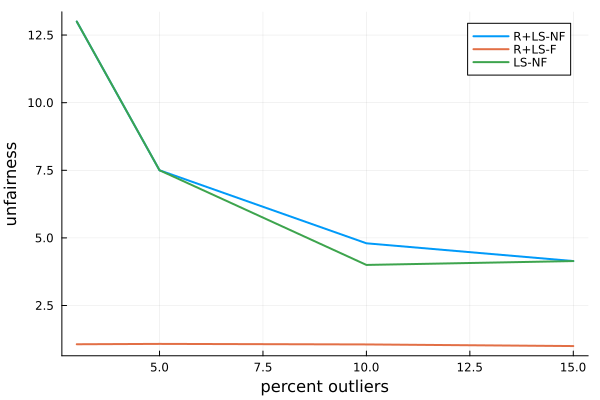}
        \caption{Unfairness vs.\ percentage of outliers.}
        \label{fig:k-unfSynth}
    \end{subfigure}

    \caption{Performance of different algorithms on the \textit{synthetic} dataset with known group structure for the $k$-median problem.}
    \label{fig:k-synth-combined}
\end{figure}

For $k$-Median with fair outliers, we implement the algorithm described in \Cref{sec:kmfo} with a minor modification and compare the fairness and cost of our algorithm against two baselines for $k$-Median with outliers (non-fair). The three algorithms we evaluate are:

\begin{itemize}
    \item \textbf{R+LS-F (Reduction + Local Search, Fair):} We implement the algorithm described in \Cref{sec:kmfo} with a minor modification: instead of using the primal-dual algorithm of Charikar et al.~\cite{charikar_algorithms_2001} for \(k\)-Median with penalties, we employ the local search algorithm proposed by Wang et. al.~\cite{wang_kmp}. The primal-dual algorithm is computationally expensive due to the use of Lagrangian relaxation; specifically, it requires solving two facility location subproblems—one using fewer than \(k\) facilities and one using more—through a binary search, with each step involving a call to a facility location solver. This significantly increases the runtime in practice. In contrast the local search is efficient. The algorithm begins with an arbitrary set of \(k\) open facilities and repeatedly considers swapping one currently open facility with one currently closed facility. Specifically, for each pair \((f_{\text{out}}, f_{\text{in}})\) where \(f_{\text{out}}\) is an open facility and \(f_{\text{in}}\) is closed, it computes the cost of the solution obtained by replacing \(f_{\text{out}}\) with \(f_{\text{in}}\), reassigning each client to its nearest open facility. If such a swap results in a decrease in total cost of at least 1\%, it is accepted. This process continues until no improving swap exists, at which point the algorithm terminates.

    Importantly, substituting the local search algorithm for the primal-dual method does not materially affect the worst-case approximation guarantees: the cost bound remains the same, and the violation bound increases only slightly.

    \item \textbf{R+LS-NF (Reduction + Local Search , Non Fair)} The algorithm is same as R+LS except we have no notion of group-wise fairness. The penalties are defined according to the total number of outliers. This is the algorithm for $k$-median with outlier given by Charikar et. al. \cite{charikar_algorithms_2001}.

    \item \textbf{LS-NF (Local Search, Non-Fair):} This algorithm performs standard local search for the \(k\)-Median objective, starting with an arbitrary set of \(k\) open facilities and repeatedly considering facility swaps as in R+LS. After convergence, the farthest \(\ell\) clients (based on their distances to the nearest open facility) are dropped as outliers.
\end{itemize}

We use the same \emph{unfairness} metric to measure the fairness of a solution and evaluate cost as the total client-to-facility connection cost. We plot the same charts as in the facility location experiments: cost vs.\ percentage of outliers and unfairness vs.\ percentage of outliers for all three algorithms.

\subsubsection{Results and Insights}

\begin{itemize}
    \item 
    The fair algorithm (R+LS-F) achieve \emph{unfairness close to 1}, while the non-fair baselines R+LS-NF and LS-NF exhibit significant group-level unfairness of up to 1.12 and 1.25 for adult dataset, 1.50 and 2 for bank dataset, and 13 and 13 for synthetic dataset, respectively (see Figures \ref{fig:k-unfAdult}, \ref{fig:k-unfBank} and \ref{fig:k-unfSynth}).
    \item enforcing fairness results in some increase in cost, it is not prohibitively large and is justified by the improved group-level guarantees (see Figures \ref{fig:k-costAdult}, \ref{fig:k-costBank} and \ref{fig:k-costSynth}).
\end{itemize}

As with the facility location experiments in this paper and the \(k\)-means results of Almanza et al.~\cite{almanza_k-clustering_2022}, our findings for \(k\)-Median demonstrate that \emph{fairness can be achieved with no significant increase in cost}.

\newpage
\printbibliography
\newpage
\appendix
\section{Lower Bounds}
\label{sec:lb}
\begin{theorem}
Assuming ETH, facility location with fair outliers is W[1]-hard parametrized by $\omega$.
\end{theorem}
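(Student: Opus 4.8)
The plan is to establish W[1]-hardness through a \emph{parameterized reduction} from a canonical W[1]-hard problem whose parameter is a number of ``colors,'' arranged so that the number of groups $\omega$ of the facility location with fair outliers instance depends only on that parameter. A natural starting point is \textsc{Multicolored Clique} (equivalently \textsc{Multicolored Independent Set}): given a graph $G$ whose vertices are partitioned into $k$ color classes $V_1,\dots,V_k$, decide whether $G$ has a clique containing exactly one vertex from each class. This problem is W[1]-hard parameterized by $k$ and, under ETH, admits no $f(k)\,n^{o(k)}$ algorithm, which is exactly the quantitative strength to transfer. (\textsc{Grid Tiling} is an equally natural source and often yields cleaner metric gadgets, possibly tightening the final exponent.)

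First I would fix the group structure so that $\omega$ is a function of $k$ alone. I would use one \emph{vertex-group} $g_i$ of clients per color class $V_i$, whose outlier budget forces the selection of a single representative vertex of $V_i$, and one \emph{edge-group} $g_{ij}$ per unordered pair $\{i,j\}$, whose budget certifies that the two chosen representatives are adjacent in $G$. This gives $\omega=k+\binom{k}{2}=\Theta(k^2)$ groups, all bounded purely by the source parameter. The facilities encode the atomic choices: one facility per vertex (select a representative) and one facility per edge (select a pairwise certificate). I would make the metric essentially $0/1$-valued, so that a client sits at distance $0$ from the facilities that ``satisfy'' it and at a large distance otherwise, and then calibrate the budgets $\ell_g$ together with the facility opening costs and a global cost threshold $B$ so that any integral solution of cost at most $B$ must open exactly one vertex-facility associated with each group $g_i$ and exactly one edge-facility associated with each group $g_{ij}$, in a mutually consistent way (the edge chosen for $g_{ij}$ has endpoints equal to the vertices chosen for $g_i$ and $g_j$).

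The correctness argument then runs in both directions. If $G$ has a multicolored clique, opening the corresponding vertex- and edge-facilities and designating as outliers exactly the clients encoding the non-chosen alternatives yields a feasible solution that meets every budget with cost at most $B$. Conversely, a solution of cost at most $B$ can afford neither to violate a budget nor to leave a consistency client unserved, so the opened facilities must describe one representative per class together with a certifying edge between every pair, which is precisely a multicolored clique. Finally, since the reduction is polynomial and $\omega=\Theta(k^2)$, any $f(\omega)\,\mathrm{poly}(n)$ algorithm would solve \textsc{Multicolored Clique} in $f(k)\,n^{O(1)}=f(k)\,n^{o(k)}$ time, contradicting ETH, while the parameterized reduction itself yields the W[1]-hardness conclusion.

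I expect the main obstacle to be the \emph{consistency gadget}: encoding, purely through a metric satisfying the triangle inequality and through per-group outlier budgets, the constraint that the edge selected for group $g_{ij}$ is incident to the vertices selected for $g_i$ and $g_j$. The difficulty is to make every inconsistent or budget-violating configuration cost strictly more than $B$ while keeping every consistent configuration within $B$, and to do so without letting $\omega$ grow with $n$. This is where the interplay of opening costs, the padded $0/1$ metric, and the budgets $\ell_g$ must be tuned carefully; obtaining a clean gap between YES and NO instances, rather than a merely bicriteria-style separation, is the crux and is likely why the authors route through a ``series of reductions'' rather than a single direct one.
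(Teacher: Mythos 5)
Your write-up is a plan rather than a proof: the step you yourself flag as ``the crux'' --- the consistency gadget forcing the edge facility opened for group $g_{ij}$ to be incident to the vertex facilities opened for $g_i$ and $g_j$ --- is exactly the part that is missing, and it is not clear it can be built at all within this problem's constraint language. In facility location with fair outliers the only levers are the metric, the opening costs, and the per-group outlier budgets $\ell_g$; any client that would be expensive to serve under an inconsistent selection can simply be declared an outlier unless its group's budget is already exhausted, so all of the enforcement must be funneled through the $k+\binom{k}{2}$ budgets. Making every inconsistent configuration exceed some budget while every multicolored clique stays within all of them, without introducing groups indexed by individual vertices (which would blow $\omega$ up to $\Theta(nk)$ and destroy the parameterization), is precisely the hard combinatorial content, and nothing in your sketch supplies it. Until that gadget is exhibited and both directions of the equivalence are verified against it, the reduction does not exist.

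The paper sidesteps this difficulty entirely by choosing a source problem with no cross-choice consistency to enforce. It reduces from Multidimensional Subset Sum parameterized by the dimension $d$ (via two intermediate variants: a cardinality-$k$ version and a $\geq$-version obtained by doubling coordinates with complements): each $d$-dimensional vector $v$ becomes a facility of opening cost $1$ with $v_g$ co-located clients of group $g$ for each coordinate $g$, facilities are placed far apart, and the budgets are set so that at least $t_g$ clients of group $g$ must be served. A solution of cost at most $k$ then opens at most $k$ facilities and serves only co-located clients, so the opened facilities form a set $U$ with $|U|\le k$ and $\sum_{v\in U} v_g \ge t_g$ for every $g$; here $\omega=d$ is the source parameter by construction, and the groups impose independent covering constraints rather than interacting choices. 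If you want to salvage your route, you must either construct the consistency gadget explicitly or, as the paper does, switch to a source problem whose parameter already counts non-interacting constraints.
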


We use the following lemmas to prove this statement. First, we need to define the problems we will use along the way.

\begin{definition}[Multidimensional Subset Sum \cite{inamdar_parameterized_2023}]
This is a known W[1]-hard problem. An instance \((V,T\) of \(\mss\) consists of \(n\) \(d\)-dimensional vectors with non-negative entries in \(\Z\) and a target vector \(T\). We want to determine if there is a subset \(U\subseteq V\) such that \(\sum_{v\in U}v=T\).
\end{definition}

\begin{definition}[\(\kmss\)]
This is a variant on \(\mss\) where we want to know if there is a suitable \(U\) with cardinality \(k\).
\end{definition}

\begin{definition}[\(\kmssg\)]
Given a set of \(d\)-dimensional vectors \(V\) with non-negative, integer-valued entries and a target vector \(T\), we want to find a subset \(U\subseteq V\) such that \(|U|\leq k\) and for all \(j=1,\ldots,d\), \(\sum_{v_j\in U}\geq t_j\).
\end{definition}

\begin{lemma}
\(\kmss\) is W[1]-hard.
\end{lemma}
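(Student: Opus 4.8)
The statement to prove is that $\kmss$ (the cardinality-constrained Multidimensional Subset Sum) is W[1]-hard. The plan is to reduce from $\mss$, which the excerpt cites as a known W[1]-hard problem, parameterized by the solution cardinality. Given an instance $(V,T)$ of $\mss$ with $n$ vectors in $\Z_{\geq 0}^d$, the obvious difficulty is that $\mss$ asks only for \emph{some} subset summing exactly to $T$, with no constraint on its size, whereas $\kmss$ demands a subset of a \emph{prescribed} cardinality $k$. So the reduction must manufacture a cardinality constraint while preserving the exact-sum requirement.

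First I would observe that a clean way to convert "subset" into "subset of size exactly $k$" is to pad each vector with a fixed extra coordinate whose value is the same for every vector, so that the sum in that coordinate directly counts the number of chosen vectors. Concretely, I would append one new dimension (coordinate $d+1$), set the entry of every $v \in V$ in this coordinate to $1$, and then create the new instance by iterating over candidate cardinalities. The natural parameter for W[1]-hardness is the cardinality $k$ of the $\kmss$ solution, and since $\mss$ is W[1]-hard parameterized by some budget on the solution size, the plan is to loop over all possible target cardinalities $k \in \{1, 2, \ldots, n\}$, building for each $k$ a $\kmss$ instance with augmented target $T' = (T, k)$ and the same augmented vector set. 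The original $\mss$ instance is a yes-instance iff at least one of these $\kmss$ instances is a yes-instance, because any subset summing to $T$ has some cardinality $k^\star \le n$, and that subset witnesses the $k^\star$-th instance; conversely a solution to any $\kmss$ instance of size exactly $k$ projects back to a subset summing to $T$ in the original coordinates.

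The subtle point I expect to be the main obstacle is \emph{preserving the parameter dependence}: W[1]-hardness parameterized by $k$ requires an fpt-reduction whose parameter in the target instance is bounded by a function of the source parameter, and the running time must be $f(k)\cdot\mathrm{poly}(n)$. If $\mss$ is W[1]-hard parameterized by the dimension $d$ (as in the Inamdar--Varadarajan framework), rather than by cardinality, then the reduction must carry the hardness through so that the $\kmss$ parameter $k$ is controlled appropriately; I would need to check exactly which parameter of $\mss$ furnishes the W[1]-hardness and ensure the coordinate-padding trick keeps that parameter (and the dimension, which grows only by one) bounded. The looping over $k$ introduces a polynomial factor in the number of instances, which is harmless for a Turing/disjunctive fpt-reduction, but I must confirm the parameter of each produced instance remains a function of the source parameter alone.

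Finally I would assemble the correctness argument: soundness and completeness follow from the exact correspondence between subsets of $V$ and subsets of the augmented vector set (the padding coordinate is common to all vectors and does not interfere with the original $d$ coordinates), and the parameter bound follows from the observation that the augmented dimension is $d+1$ and the cardinality parameter is at most $n$ but, crucially, tied to whichever quantity drives the W[1]-hardness of $\mss$. Once $\kmss$ is established as W[1]-hard, the subsequent lemma relaxing the equality constraints to inequalities (yielding $\kmssg$) and then the geometric/metric embedding into facility location with fair outliers will chain together to give the final theorem.
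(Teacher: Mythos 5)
Your proposal is correct and follows essentially the same route as the paper: a disjunctive (Turing) reduction from $\mss$ that loops over all candidate cardinalities $k=1,\ldots,n$ and invokes the $\kmss$ decider on each, accepting iff some call accepts. The extra padding coordinate you introduce is harmless but redundant here, since $\kmss$ as defined already carries the cardinality constraint explicitly, so the instance $(V,T,k)$ needs no augmentation; your added care about which parameter drives the W[1]-hardness is a point the paper itself glosses over.
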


\begin{proof}

Take an instance \(\lrp{V,t}\) of \(\mss\). For each \(k=1,\ldots,n\), use the decider for \(\kmss\) to determine if \(\lrp{V,t,k}\) is an instance of \(\kmss\). If \(\lrp{V,t}\in\mss\), then there exists some \(k\) for which \(\lrp{V,t,k}\in \kmss\). If \(\lrp{V,t}\notin\mss\), then there is no such \(k\). Therefore, we can use the decider for \(\kmss\) to solve an instance of \(\mss\).

\end{proof}

\begin{lemma}
\(\kmssg\) is W[1]-hard.
\end{lemma}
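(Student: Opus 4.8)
The plan is to reduce $\kmss$, which the previous lemma establishes is W[1]-hard, to $\kmssg$ by a parameterized reduction that preserves the cardinality bound $k$ and blows up the dimension by only a factor of two. The difficulty is that $\kmss$ demands an \emph{exact} coordinate-wise equality $\sum_{v\in U}v=T$ together with an \emph{exact} cardinality $|U|=k$, whereas $\kmssg$ offers only one-sided constraints ($\geq$ in each coordinate and $\leq k$ on the cardinality). The key idea is to append, for each original coordinate, a \emph{complementary} coordinate that converts the missing upper bound into a lower bound; as a bonus, these complementary lower bounds will simultaneously force the cardinality to be exactly $k$, so no separate cardinality gadget is needed.

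\textbf{Construction.} Let $(V,T,k)$ be an instance of $\kmss$ with $V=\{v_1,\dots,v_n\}\subseteq \Z_{\geq 0}^d$, and set $M=1+\max_{i,j}(v_i)_j$. If $t_j>kM$ for some $j$, the instance is trivially infeasible (a sum of $k$ entries, each at most $M-1$, cannot reach $t_j$), so we output a fixed no-instance; otherwise we build a $\kmssg$ instance $(V',T',k)$ of dimension $2d$. For each $v_i$ we define $v_i'\in\Z_{\geq 0}^{2d}$ by $(v_i')_j=(v_i)_j$ and $(v_i')_{d+j}=M-(v_i)_j$ for $j=1,\dots,d$, and we set $t'_j=t_j$ and $t'_{d+j}=kM-t_j$. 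All entries and targets are non-negative integers by the choice of $M$ and the preprocessing, and the construction is polynomial time.

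\textbf{Correctness.} For the forward direction, a witness $U\subseteq V$ with $|U|=k$ and $\sum_{v\in U}v=T$ yields the corresponding $U'\subseteq V'$ satisfying $\sum_{v\in U}v_j=t_j\geq t_j$ in coordinate $j\leq d$, and $\sum_{v\in U}(M-v_j)=|U|M-\sum_{v\in U}v_j=kM-t_j\geq kM-t_j$ in coordinate $d+j$, so $U'$ is a valid $\kmssg$ witness. For the reverse direction, suppose $U'\subseteq V'$ with $|U'|\leq k$ meets all $2d$ lower bounds, and let $U$ be the corresponding subset of $V$. Combining the original constraint $\sum_{v\in U}v_j\geq t_j$ with the complementary constraint $|U|M-\sum_{v\in U}v_j\geq kM-t_j$ gives $t_j\leq\sum_{v\in U}v_j\leq(|U|-k)M+t_j$, hence $(|U|-k)M\geq 0$; since $M\geq 1$ this forces $|U|\geq k$, and together with $|U|\leq k$ we obtain $|U|=k$. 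Substituting $|U|=k$ back collapses both inequalities to $\sum_{v\in U}v_j=t_j$ for every $j$, so $U$ witnesses the original $\kmss$ instance.

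Because the reduction sends dimension $d$ to $2d$ (a function of the parameter) and leaves $k$ unchanged while running in polynomial time, it is a valid parameterized reduction, so W[1]-hardness of $\kmss$ transfers to $\kmssg$. The main obstacle, and the crux of the argument, is exactly this encoding of an equality/exact-cardinality problem into a purely one-sided ($\geq$, $\leq k$) problem; the observation that the complementary coordinates pin down the cardinality as a by-product is what lets both directions follow from the short arithmetic above, without any additional gadgetry.
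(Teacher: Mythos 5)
Your proof is correct and follows essentially the same route as the paper: appending complementary coordinates $M-v_j$ with targets $kM-t_j$ so that the new lower bounds simultaneously force $|U|=k$ and convert the one-sided constraints into equalities. Your version is in fact slightly tighter than the paper's --- you pin down the constant $M$ explicitly and produce a single $\kmssg$ instance rather than iterating over $k'=1,\dots,k$ with a ``sufficiently large'' $L$ --- but the key idea is identical.
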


\begin{proof}

Take an instance \(\lrp{V,t,k}\) of \(\kmss\). For a sufficiently large constant \(L\) and for each \(k'=1,\ldots,k\), consider an instance \(\lrp{V',t',k'}\) of \(\kmssg\) where \(v'=\lrp{v_1,\ldots,v_d,L-v_1,\ldots,L-v_d}\) for every \(v\in V\) and \(t'=\lrp{t_1,\ldots,t_d,k'L-t_1,\ldots,k'L-t_d}\). 

We can use the decider for \(\kmssg\) to determine if \(\lrp{V,t,k}\in\kmss\). We assumed that \(L\) is sufficiently large (perhaps \(k\cdot t_1\cdots t_d\)). As a result, we do not have to worry about the case where \(k'L-t_i\leq(k'-1)L\); we can treat the sum to \(k'L\) and the sum to \(-t_i\) separately. 

If \(\lrp{V',t',k'}\in \kmssg\), then there are exactly \(k'\) vectors in the set \(U'\) with
\[
\sum_{v'\in U'}v'_j = \sum_{v'\in U'}L-v_j \geq t'_j = k'L-t_j
\]
for \(j=d+1,\ldots,2d\). Since we know that the \(k'L\) is only coming from the contribution of \(L\) from each vector, this implies that
\[
\sum_{v'\in U'}v_j \leq t_j.
\]
Moreover, those same vectors in \(U'\) have
\[
\sum_{v'\in U'}v'_j = \sum_{v'\in U'} v_j \geq t_j
\]
for \(j=1,\ldots,d\). If a set \(U'\) exists, then it is a size \(k'\leq k\) subset of \(V'\) with
\[
\sum_{v'\in U'}v_j \leq t_j.
\]
Therefore, \(\lrp{V,t,k}\in \kmss\). If there is no \(k'\leq k\) where \(\lrp{V',t',k'}\in\kmssg\), then \(\lrp{V,t,k}\notin \kmss\).

\end{proof}

Now, we can prove the theorem. 

\begin{proof}
For each vector, create a facility with co-located clients. These facilities are some large distance away from each other. Specifically, for each vector \(v\in V\), create a facility \(i\) with \(f_i=1\). For each dimension \(g=1,\ldots,d\), place \(v_g\) clients of group \(g\) colocated with the facility. Now, we have an instance of facility location with fair outliers, \(\lrp{F,C,\ell}\), where we have to cover at least \(\ell_g\) clients from each group. 

By using the decider for the facility location with fair outliers problem with cost no more than \(k\), we can solve the instance of \(\kmssg\).
\end{proof}
    
\begin{corollary}
Assuming ETH, $k$-median with fair outliers is W[1]-hard parametrized by $\omega$.
\end{corollary}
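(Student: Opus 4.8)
The plan is to reuse the reduction from $\kmssg$ almost verbatim, replacing the facility-opening mechanism by the hard cardinality budget. Starting from an instance $\lrp{V,t,k}$ of $\kmssg$, I build a $k$-median with fair outliers instance exactly as in the facility-location construction: for each vector $v\in V$ create one facility, placed pairwise strictly apart, and for each dimension $g=1,\dots,d$ colocate $v_g$ clients of group $g$ at that facility. I set the number of groups to $\omega=d$, the facility budget to the same $k$, and the per-group outlier bound to $\ell_g=\lrp{\sum_{v\in V}v_g}-t_g$, so that serving at least $t_g$ clients of group $g$ is equivalent to leaving at most $\ell_g$ outliers in that group. The only change from the facility-location reduction is that the cardinality constraint $|U|\le k$ is now enforced directly by the $k$-median budget rather than through unit opening costs, and correspondingly the decision threshold on the objective becomes $0$ instead of $k$.

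The key claim is that this instance admits a feasible $k$-median-with-fair-outliers solution of assignment cost $0$ if and only if $\lrp{V,t,k}\in\kmssg$. For the forward direction, given a witness $U\subseteq V$ with $|U|\le k$ and $\sum_{v\in U}v_g\ge t_g$ for all $g$, I open exactly the facilities indexed by $U$ and serve every client colocated with an open facility at distance $0$; group $g$ then has at least $t_g$ served clients, hence at most $\ell_g$ outliers, and the total cost is $0$. For the reverse direction, I use that distinct facilities sit at distinct locations, so any client contributing $0$ to the assignment cost must coincide with an open facility; thus a cost-$0$ feasible solution opens a set $U$ of at most $k$ facilities whose colocated clients already meet every group requirement, giving $\sum_{v\in U}v_g\ge t_g$ and witnessing membership in $\kmssg$.

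Consequently, a decider for $k$-median with fair outliers that tests whether $\O=0$ decides $\kmssg$. Since the construction runs in polynomial time and sets $\omega=d$, it is an FPT reduction that preserves the parameter; as $\kmssg$ is W[1]-hard parameterized by $d$ (its hardness resting on ETH through the chain $\mss\to\kmss\to\kmssg$), the corollary follows. The step that needs the most care is the reverse direction: I must ensure the pairwise facility separation is strictly positive so that no client served across distinct facilities can contribute $0$ to the cost, making the zero-cost characterization exactly equivalent to $\kmssg$. Everything else is a routine transcription of the facility-location argument.
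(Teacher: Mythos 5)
Your proposal is correct and matches the approach the paper intends: the corollary is obtained by the same $\kmssg$ reduction with the unit opening costs replaced by the hard budget $k$ and the cost threshold dropped from $k$ to $0$. Your spelled-out conversion of the coverage requirement $t_g$ into the outlier bound $\ell_g=\lrp{\sum_{v\in V}v_g}-t_g$ and the zero-cost characterization via strictly separated facilities are exactly the "slight modification" the paper alludes to, stated more carefully than the paper's own sketch.
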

\end{document}